\newcommand{\A}{\mathbb{A}}
\newcommand{\bA}{\boldsymbol{A}}
\newcommand{\bB}{\boldsymbol{B}}
\newcommand{\bC}{\boldsymbol{C}}
\newcommand{\be}{\boldsymbol{e}}
\newcommand{\bE}{\boldsymbol{E}}
\newcommand{\bF}{\boldsymbol{F}}
\newcommand{\bG}{\boldsymbol{G}}
\newcommand{\bone}{\boldsymbol{1}}
\newcommand{\bP}{\boldsymbol{P}}
\newcommand{\bphi}{{\boldsymbol{\phi}}}
\newcommand{\bPhi}{{\boldsymbol{\Phi}}}
\newcommand{\bPi}{\boldsymbol{\Pi}}
\newcommand{\bQ}{\boldsymbol{Q}}
\newcommand{\bS}{\boldsymbol{S}}
\newcommand{\bT}{\boldsymbol{T}}
\newcommand{\bV}{\boldsymbol{V}}
\newcommand{\bW}{\boldsymbol{W}}
\newcommand{\bX}{\boldsymbol{X}}
\newcommand{\bzD}{\boldsymbol{\Delta}}
\newcommand{\bzero}{\boldsymbol{0}}
\newcommand{\bzs}{\boldsymbol{\sigma}}
\newcommand{\C}{\mathbb{C}}
\newcommand{\CA}{\mathcal{A}}
\newcommand{\CB}{\mathcal{B}}
\newcommand{\CC}{\mathcal{C}}
\newcommand{\CF}{\mathcal{F}}
\newcommand{\CL}{\mathcal{L}}
\newcommand{\CN}{\mathcal{N}}
\newcommand{\CQ}{\mathcal{Q}}
\newcommand{\CW}{\mathcal{W}}
\newcommand{\F}{\mathbb{F}}
\newcommand{\hf}{\frac{1}{2}}
\newcommand{\hy}{\hat{y}}
\newcommand{\J}{\mathbb{J}}
\newcommand{\K}{\mathbb{K}}
\newcommand{\N}{\mathbb{N}}
\newcommand{\os}{{\overline{s}}}
\newcommand{\pt}{\partial}
\newcommand{\R}{\mathbb{R}}
\newcommand{\tPi}{\tilde{\boldsymbol{\Pi}}}
\newcommand{\tzs}{\tilde{\sigma}}
\newcommand{\V}{\mathbb{V}}
\newcommand{\za}{\alpha}
\newcommand{\zb}{\beta}
\newcommand{\zG}{\Gamma}
\newcommand{\zo}{\omega}
\newcommand{\zO}{\Omega}
\newcommand{\zs}{\sigma}
\newcommand{\zS}{\Sigma}
\newcommand{\zz}{\zeta}
\numberwithin{equation}{section}
\newtheorem{theorem}{\bf Theorem}
\numberwithin{theorem}{section}
\newtheorem{corollary}[theorem]{\bf Corollary}
\newtheorem{proposition}[theorem]{\bf Proposition}
{\theorembodyfont{\rmfamily} }
\newenvironment{proof}
{\begin{trivlist}\item[\hskip\labelsep\quad{\bf Proof.}
\hspace{0.5 em}]}{\hfill \rule{0.5em}{0.5em} \end{trivlist}}
\numberwithin{equation}{section}
\numberwithin{figure}{section}
\author{Francesco Demontis\footnote{Dipartimento di Matematica e Informatica,
Universit\`a di Ca\-glia\-ri, Via Ospedale 72, 09124 Cagliari, Italy. Email:
fdemontis@unica.it, cornelis110553@gmail.com}\,, Cornelis van der
Mee\footnotemark[1]}
\title{From the AKNS system to the matrix Schr\"o\-din\-ger equation with
vanishing potentials: Direct and inverse problems\footnote{LaTeX compilation
date and time: \today\ at \currenttime}}
\begin{document}
\date{}
\maketitle

\begin{abstract}
We relate the scattering theory of the focusing AKNS system with vanishing
boundary conditions to that of the matrix Schr\"o\-din\-ger equation. The
corresponding Miura transformation which allows this connection, converts the
focusing matrix nonlinear Schr\"o\-din\-ger (NLS) equation into a new nonlocal
integrable equation. We apply the matrix triplet method to derive the
multisoliton solutions of the nonlocal integrable equation, thus proposing a new
method to solve the matrix NLS equation.
\end{abstract}

\section{Introduction}\label{sec:1}

 For nearly 50 years nonlinear Schr\"o\-din\-ger (NLS) equations have served as
the basic models for describing surface waves on deep waters
\cite{Abl,ASg,ZS72}, signals along optical fibres \cite{HT,Has,Shaw}, particle
states in Bose-Einstein condensates \cite{PS1,PS2,KFCG}, plasma oscillations
\cite{Z}, and magnetic spin waves \cite{CTNP,ZP}. NLS equations with solutions
decaying at infinity have been studied in detail by means of the inverse
scattering transform \cite{AKNS,ASg,CD82,FT,APT}, where a time invariant
canonical transformation converts the initial-value problem of the matrix NLS
equation into the (elementary) time evolution of the scattering data of the
so-called AKNS system.

In this article we present an apparently new method for solving matrix NLS
equations inspired by the Miura transformation \cite{ASg,AC} which traditionally
allows one to derive Korteweg-de Vries (KdV) solutions from modified
Korteweg-de Vries (mKdV) solutions (but not necessarily vice versa). By
a similar transformation the direct and inverse scattering theory of the
focusing $n=m_1+m_2$ AKNS system can be related to the direct and inverse
scattering theory of a matrix Schr\"o\-din\-ger equation whose $n\times n$
matrix potential $\bQ$ satisfies the adjoint symmetry relation
\begin{equation}\label{1.1}
\bQ^\dagger=\bzs_3\bQ\bzs_3,
\end{equation}
where $\bzs_3=I_{m_1}\oplus(-I_{m_2})$ and the dagger denotes the matrix
conjugate transpose. The traditional application of the matrix Schr\"o\-din\-ger
equation to quantum graphs, quantum wires, and quantum mechanical scattering of
particles with internal structure \cite{Berk,BCFK,BkKm,BkLiu,BoKu,EKKST,Ger,
GerPav,GtSm,Hm02,Hm04,Hm05,KS99,KS00,Km04,Km05,KuNw05,KuNw10,KuSt} has led to
the almost exclusive development of matrix Schr\"o\-din\-ger scattering theory
for selfadjoint potentials for which $\bQ^\dagger=\bQ$. After the seminal
papers on the scalar Schr\"o\-din\-ger scattering theory by Faddeev \cite{Fa64},
Deift and Trubowitz \cite{DT79}, and others (see \cite{CS89}), in the matrix
case the half-line theory has been described in detail by Agranovich and
Marchenko \cite{AM} and by Aktosun and Weder \cite{AW18,AW20} and the full-line
theory by Wadati and Kamijo \cite{Wd}; the most general small energy asymptotics
of the scattering data, crucial to developing the direct and inverse scattering
theory rigorously, is due to Klaus \cite{K88} in the scalar case and to Aktosun
et al. \cite{AKV01} in the matrix case.

Energy losses in quantum graphs, quantum wires, and particles with internal
structure naturally lead to matrix Schr\"o\-din\-ger potentials whose imaginary
part $[\bQ-\bQ^\dagger]/2i$ has constant sign. In this article we thus require
a modified direct and inverse scattering theory when solving the matrix NLS
equation by using the matrix Schr\"o\-din\-ger equation for potentials
satisfying
\begin{equation}\label{1.2}
\bQ^\dagger=\bzs_3\bQ\bzs_3.
\end{equation}

In this article we depart from the direct and inverse scattering theory of the
focusing AKNS system \cite{AKNS,APT,FT}
\begin{equation}\label{1.3}
v_x=(-ik\bzs_3+\CQ)v,
\end{equation}
where $v=v(x,k)$ is a vector function with $n=m_1+m_2$ components and the
potential $\CQ$ anticommutes with $\bzs_3=I_{m_1}\oplus(-I_{m_2})$. Letting
$L=i\bzs_3[(d/dx)I_n-\CQ]$ stand for the AKNS Hamiltonian, we easily verify that
$L^2$ is the matrix Schr\"o\-din\-ger Hamiltonian given by
\begin{align*}
L^2v&=-\bzs_3[(d/dx)I_n-\CQ]\bzs_3[(d/dx)I_n-\CQ]v\\
&=-[(d/dx)I_n+\CQ][(d/dx)I_n-\CQ]v\\
&=-v_{xx}+\CQ^2v-\CQ v_x+(\CQ v)_x=-v_{xx}+\bQ v,
\end{align*}
where
\begin{equation}\label{1.4}
\bQ=\CQ^2+\CQ_x
\end{equation}
is a matrix Schr\"o\-din\-ger potential obtained from $\CQ$ by the Miura
transform \eqref{1.4}. Assuming that the entries of $\CQ$ and $\CQ_x$ belong to
the weighted $L^1$-space $L^1(\R;(1+|x|)dx)$, we arrive from \eqref{1.3} at the
matrix Schr\"o\-din\-ger equation
\begin{equation}\label{1.5}
-v_{xx}+\bQ v=k^2v,
\end{equation}
where $\bQ$ is a so-called Faddeev class potential (i.e., its entries belong to
$L^1(\R;(1+|x|)dx)$) satisfying the adjoint symmetry relation \eqref{1.2}.

It is well-known \cite{ZS72,AKNS,APT,FT} that for a suitable time evolution of
the AKNS scattering data the potential $\CQ=\CQ(x;t)$ satisfies the matrix NLS
equation
\begin{equation}\label{1.6}
i\CQ_t+\CQ_{xx}-2\CQ^3=0_{n\times n},
\end{equation}
where $\CQ$ anticommutes with $\bzs_3$. Direct AKNS scattering theory involves
Jost solutions and certain scattering coefficients which are analytic in $k$ in
either upper half complex plane $\C^+$ or the lower half complex plane $\C^-$,
with continuous extensions up to the real $k$-line \cite{AKNS,APT,FT}. On the
other hand, traditionally direct matrix Schr\"o\-din\-ger scattering theory
involves Jost solutions and certain scattering coefficients which are continuous
in $k\in\C^+\cup\R$ and analytic in $k\in\C^+$ (\cite{AM,AW18,AW20} for the
half-line theory and \cite{Wd,AKV01} for the full line theory). In fact, the
values $\pm k$ of the AKNS spectral variable can be made to correspond to the
matrix Schr\"o\-din\-ger spectral variable $k$, which greatly simplifies direct
and inverse scattering theory. It is therefore natural to wonder which nonlinear
evolution equation $\bQ(x;t)$ is satisfied when $\CQ(x;t)$ is a matrix NLS
solution. We are thus lead to a nonlocal integrable equation for $\bQ(x;t)$ to
be derived from a Lax pair.

The principal objective of this article is to derive the multisoliton solutions
of the focusing NLS system by applying the Miura transform \eqref{1.4} to the
matrix Schr\"o\-din\-ger equation which is then solved explicitly by the
so-called matrix triplet method (see \cite{SIMAI} and references therein). These
solutions are obtained by solving the corresponding Marchenko integral
equations. Once obtained, these solutions are verified by direct substitution
into the nonlocal integrable equation satisfied by $\bQ(x;t)$.

Let us discuss the contents of this article. In Sec.~\ref{sec:2} we derive the
nonlocal integrable equation for $\bQ$ from a Lax pair. Next, in
Sections~\ref{sec:3}-\ref{sec:4} we develop the direct and inverse scattering
theory of the matrix Schr\"o\-din\-ger equation \eqref{1.5} for Faddeev class
potentials $\bQ$ satisfying \eqref{1.2}. In particular, we introduce the Jost
solutions and the scattering coefficients, write them as Fourier transforms of
$L^1$-functions, and derive the Marchenko integral equations to solve the
inverse scattering problem. The rather technical proof of the adjoint symmetry
relations of the Marchenko integral kernels is deferred to Appendix~\ref{sec:B}.
The relationship between our Jost solutions and scattering coefficients and
those prevailing in AKNS scattering theory will be worked out in
Sec.~\ref{sec:5}. We then go on to derive the time evolution of the scattering
data in Sec.~\ref{sec:6}. In Sec.~\ref{sec:7} we apply the so-called matrix
triplet method to derive the multisoliton solutions. Three appendices have been
attached. In Appendix~\ref{sec:A} we derive the nonlocal integrable equation for
$\bQ$ directly from the NLS equation for $\CQ$ without using Lax pairs. In
Appendix~\ref{sec:B} we derive the adjoint symmetry relations for the Marchenko
integral kernels. Finally, in Appendix~\ref{sec:C} we prove by direct
substitution that the multisoliton solutions satisfy the nonlocal integrable
equation. Moreover, the emphasis on the methods used in this article allows one
to construct the reflectionless solutions of the hierarchy of equations with
Lax pairs $\{(L^j,A)\}_{j=1}^\infty$, where $L$ is the AKNS Hamiltonian and $A$
describes the time evolution (see Sec.~\ref{sec:2}). In fact, we present the
details only for the first two equations of the hierarchy.

We adopt blackboard boldface symbols for many of the quantities pertaining to
the AKNS system, thus deviating from the notational system adopted in
\cite{APT}. Boldface symbols are reserved for many of the quantities pertaining
to the matrix Schr\"o\-din\-ger equation. We also deviate from the praxis of
\cite{AKNS,APT} in allowing right and left to correspond to the real line
endpoints of transmission, both in the AKNS case and in the (matrix)
Schr\"o\-din\-ger case. Thus we deviate from the right vs. left conventions
adopted in \cite{AKNS,APT}.

\section{Lax Pair for the new integrable model}\label{sec:2}

As to be indicated shortly, the matrix NLS system is governed by a Lax pair
$\{L,A\}$ of linear operators \cite{Lx,ASg,EH}, where $Lv=kv$ is the AKNS
eigenvalue problem and $v_t=Av$ describes the time evolution. Then the matrix
NLS system is equivalent to the zero curvature condition
$$L_t+LA-AL=\bzero,$$
where $\bzero$ denotes the zero operator on a suitable function space. We now
observe that $\{L^2,A\}$ is also a Lax pair of linear operators, where
$L^2v=k^2v$ is the matrix Schr\"o\-din\-ger eigenvalue problem and $v_t=Av$
describes the time evolution. Then the accompanying nonlinear evolution equation
is equivalent to the zero curvature condition
$$(L^2)_t+L^2A-AL^2=\bzero.$$

Putting
\begin{subequations}\label{2.1}
\begin{align}
L&=i\bzs_3(\pt_xI_n-\CQ),\label{2.1a}\\
A&=i\bzs_3\left(2\pt_x^2I_n-2\CQ\pt_x-\CB_0\right),\label{2.1b}
\end{align}
\end{subequations}
where $n=m_1+m_2$ and $\CB_0$ is to be determined, we compute
\begin{align*}
&\bzero=-(L_t+LA-AL)\\
&=i\bzs_3\CQ_t
+\bzs_3(\pt_xI_n-\CQ)\bzs_3\left(2\pt_x^2I_n-2\CQ\pt_x-\CB_0\right)\\
&-\bzs_3\left(2\pt_x^2I_n-2\CQ\pt_x-\CB_0\right)\bzs_3(\pt_xI_n-\CQ)\\
&=i\bzs_3\CQ_t
+(\pt_xI_n+\CQ)\left(2\pt_x^2I_n-2\CQ\pt_x-\CB_0\right)\\
&-\left(2\pt_x^2I_n+2\CQ\pt_x-\bzs_3\CB_0\bzs_3\right)(\pt_xI_n-\CQ)\\
&=i\bzs_3\CQ_t+2\pt_x^3I_n-2\CQ_x\pt_x-2\CQ\pt_x^2-(\CB_0)_x
-\CB_0\pt_x+2\CQ\pt_x^2-2\CQ^2\pt_x-\CQ\CB_0
-2\pt_x^3I_n\\&+2\CQ_{xx}+4\CQ_x\pt_x+2\CQ\pt_x^2-2\CQ\pt_x^2
+2\CQ\CQ_x+2\CQ^2\pt_x+\bzs_3\CB_0\bzs_3\pt_x-\bzs_3\CB_0\bzs_3\CQ\\
&=i\bzs_3\CQ_t-2(-\CQ_x+\tfrac{1}{2}[\CB_0-\bzs_3\CB_0\bzs_3])\pt_x\\
&-(\CB_0)_x-\CQ\CB_0+2\CQ_{xx}+2\CQ\CQ_x-\bzs_3\CB_0\bzs_3\CQ.
\end{align*}
To cancel the $\pt_x$ terms, $\CB_0$ should be the sum of $\CQ_x$ and a matrix
commuting with $\bzs_3$. Thus taking $\CB_0=\CQ^2+\CQ_x=\bQ$, we get the matrix
NLS equation
\begin{equation}\label{2.2}
i\bzs_3\CQ_t+\CQ_{xx}-2\CQ^3=0_{n\times n}.
\end{equation}

Putting $\CL=L^2=-\pt_x^2+\bQ$, we compute
\begin{align*}
i\bzs_3&[\CL_t+\CL A-A\CL]=i\bzs_3\bQ_t\\
&-(-\pt_x^2+\bzs_3\bQ\bzs_3)\left[2\pt_x^2-2\CQ\pt_x-\bQ\right]
+\left[2\pt_x^2-2\CQ\pt_x-\bQ\right](-\pt_x^2+\bQ)\\
&=i\bzs_3\bQ_t+4(-\CQ_x+\tfrac{1}{2}[\bQ-\bzs_3\bQ\bzs_3])\pt_x^2\\
&+2(-\CQ_{xx}+\bQ_x+\bzs_3\bQ\bzs_3\CQ-\CQ\bQ)\pt_x\\
&+\bQ_{xx}+\bzs_3\bQ\bzs_3\bQ-2\CQ\bQ_x-\bQ^2.
\end{align*}
Then \eqref{1.4} and $\bzs_3\CQ=-\CQ\bzs_3$ imply that the terms in the third
member involving $\pt_x^2$ and $\pt_x$ vanish. Therefore, we have arrived at the
nonlinear evolution equation
\begin{equation}\label{2.3}
i\bzs_3\bQ_t+\bQ_{xx}-\bQ^2+\bzs_3\bQ\bzs_3\bQ-2\CQ\bQ_x=0_{n\times n},
\end{equation}
where
\begin{equation}\label{2.4}
\CQ(x;t)=-\int_x^\infty dy\,\hf\left(\bQ-\bzs_3\bQ\bzs_3\right)
=\int_{-\infty}^x dy\,\hf\left(\bQ-\bzs_3\bQ\bzs_3\right).
\end{equation}
A proof of the integrability of \eqref{2.3} based directly on \eqref{1.4} and
\eqref{1.6} will be given in Appendix~\ref{sec:A}. Finally, using \eqref{1.4}
in \eqref{2.3} and singling out the block off-diagonal component we easily get
the derivative of \eqref{2.2} with respect to $x$ and hence \eqref{2.2} itself;
the block diagonal component yields the commutator of \eqref{2.2} and $\CQ$.

\section{Direct Scattering Problem}\label{sec:3}

In this section we introduce the Jost solutions and scattering coefficients for
the matrix Schr\"o\-din\-ger equation \eqref{1.5} with Faddeev class potential
$\bQ$ satisfying \eqref{1.2}. For the scalar Schr\"o\-din\-ger equation with
real Faddeev class potential the direct scattering theory is well documented
\cite{Fa64,DT79,CD82,CS89,AC}. For selfadjoint potentials, the matrix theory on
the half-line is discussed at length in \cite{AM,AW18,AW20}, whereas that on the
full-line can be found in \cite{Wd,AKV01}. Here we give the modifications caused
by the adjoint symmetry \eqref{1.2}.

\subsection{Jost solutions of the matrix Schr\"odinger equation}\label{sec:3.1}

Let us define the {\it Jost solution from the left} $F_l(x,k)$ and the {\it Jost
solution from the right} $F_r(x,k)$ as those solutions of the matrix
Schr\"o\-din\-ger equation \eqref{1.5} which satisfy the asymptotic conditions
\begin{subequations}\label{3.1}
\begin{alignat}{3}
 F_l(x,k)&=e^{ikx}\left[I_n+o(1)\right],&\qquad&x\to+\infty,\label{3.1a}\\
 F_r(x,k)&=e^{-ikx}\left[I_n+o(1)\right],&\qquad&x\to-\infty,\label{3.1b}
\end{alignat}
\end{subequations}
where $n=m_1+m_2$. Calling $m_l(x,k)=e^{-ikx}F_l(x,k)$ and
$m_r(x,k)=e^{ikx}F_r(x,k)$ {\it Faddeev functions}, we easily define them as
the unique solutions of the Volterra integral equations
\begin{subequations}\label{3.2}
\begin{align}
m_l(x,k)&=I_n+\int_x^\infty dy\,\frac{e^{2ik(y-x)}-1}{2ik}\bQ(y)m_l(y,k),
\label{3.2a}\\
m_r(x,k)&=I_n+\int_{-\infty}^x dy\,\frac{e^{2ik(x-y)}-1}{2ik}\bQ(y)m_r(y,k).
\label{3.2b}
\end{align}
\end{subequations}
Then, for each $x\in\R$, $m_l(x,k)$ and $m_r(x,k)$ are continuous in
$k\in\C^+\cup\R$, are analytic in $k\in\C^+$, and tend to $I_n$ as $k\to\infty$
from within $\C^+\cup\R$. For $0\neq k\in\R$ we can reshuffle \eqref{3.2} and
arrive at the asymptotic relations
\begin{subequations}\label{3.3}
\begin{align}
 F_l(x,k)&=e^{ikx}A_l(k)+e^{-ikx}B_l(k)+o(1),&\qquad&x\to-\infty,\label{3.3a}\\
 F_r(x,k)&=e^{-ikx}A_r(k)+e^{ikx}B_r(k)+o(1),&\qquad&x\to+\infty,\label{3.3b}
\end{align}
\end{subequations}
where
\begin{subequations}\label{3.4}
\begin{align}
A_{r,l}(k)&=I_n-\frac{1}{2ik}\int_{-\infty}^\infty dy\,\bQ(y)m_{r,l}(y,k),
\label{3.4a}\\
B_{r,l}(k)&=\frac{1}{2ik}\int_{-\infty}^\infty dy\,
e^{\mp2iky}\bQ(y)m_{r,l}(y,k).\label{3.4b}
\end{align}
\end{subequations}
Thus $A_{r,l}(k)$ is continuous in $0\neq k\in\C^+\cup\R$, is analytic in
$k\in\C^+$, and tends to $I_n$ as $k\to\infty$ from within $\C^+\cup\R$, while
$2ik[I_n-A_{r,l}(k)]$ has the finite limit
$-\bzD_{r,l}=\int_{-\infty}^\infty dy\,\bQ(y)m_{r,l}(y,k)$ as $k\to0$ from
within $\C^+\cup\R$. By the same token, $B_{r,l}(k)$ is continuous in
$0\neq k\in\R$, vanishes as $k\to\pm\infty$, and satisfies
$2ikB_{r,l}(k)\to-\bzD_{r,l}$ as $k\to0$ along the real $k$-axis.

Putting
$$\bF_l(x,k)=\begin{pmatrix}F_l(x,-k)&F_l(x,k)\\F_l^\prime(x,-k)&F_l^\prime(x,k)
\end{pmatrix},\ \bF_r(x,k)=\begin{pmatrix}F_r(x,k)&F_r(x,-k)\\F_r^\prime(x,k)
&F_r^\prime(x,-k)\end{pmatrix},$$
where the prime denotes differentiation with respect to $x$, we obtain
\begin{subequations}\label{3.5}
\begin{align}
\bF_r(x,k)&=\bF_l(x,k)\begin{pmatrix}A_r(k)&B_r(-k)\\B_r(k)&A_r(-k)
\end{pmatrix},\label{3.5a}\\
\bF_l(x,k)&=\bF_r(x,k)\begin{pmatrix}A_l(-k)&B_l(k)\\B_l(-k)&A_l(k)
\end{pmatrix},\label{3.5b}
\end{align}
\end{subequations}
where $0\neq k\in\R$. Using that $\bF_{r,l}(x,k)$ satisfies the linear first
order system
\begin{equation}\label{3.6}
\begin{pmatrix}V\\V^\prime\end{pmatrix}^\prime=\begin{pmatrix}0_{n\times n}
&I_n\\ \bQ(x)-k^2I_n&0_{n\times n}\end{pmatrix}
\begin{pmatrix}V\\V^\prime\end{pmatrix}
\end{equation}
with traceless system matrix, we see that, for $0\neq k\in\R$, $\bF_{r,l}(x,k)$
has a determinant not depending on $x\in\R$. Using \eqref{3.1} we easily verify
that $\det\bF_{r,l}(x,k)=(2ik)^n$ for $0\neq k\in\R$.

Putting
\begin{align}
\bPhi(x,k)&=\begin{pmatrix}F_r(x,k)&F_l(x,k)\\F_r^\prime(x,k)&F_l^\prime(x,k)
\end{pmatrix}\nonumber\\&=\bF_r(x,k)\begin{pmatrix}I_n&B_l(k)\\0_{n\times n}
&A_l(k)\end{pmatrix}=\bF_l(x,k)\begin{pmatrix}A_r(k)&0_{n\times n}\\B_r(k)&I_n
\end{pmatrix},\label{3.7}
\end{align}
we easily see that, for $0\neq k\in\R$, $\det A_r(k)=\det A_l(k)$. By analytic
continuation we get $\det A_r(k)=\det A_l(k)$ for $0\neq k\in\C^+\cup\R$.

Letting $\zs_2=\left(\begin{smallmatrix}0&-i\\i&0\end{smallmatrix}\right)$
denote the second Pauli matrix, we define
$$\zs_2\otimes\bzs_3=\begin{pmatrix}0_{n\times n}&-i\bzs_3\\
i\bzs_3&0_{n\times n}\end{pmatrix}$$
as the Kronecker product of $\zs_2$ and $\bzs_3$ (cf. \cite{HJ}).

\begin{proposition}\label{th:3.1}
 For $0\neq k\in\C^+\cup\R$, let $\bV(x,k)$ and $\bW(x,k)$ be two size
compatible matrix solutions of the linear first order system {\rm\eqref{3.6}}.
Then
$$\bW(x,-k^*)^\dagger(\zs_2\otimes\bzs_3)\bV(x,k)$$
is independent of $x\in\R$. Similarly, for $0\neq k\in\R$ the matrix
$$\bW(x,k)^\dagger(\zs_2\otimes\bzs_3)\bV(x,k)$$
is independent of $x\in\R$.
\end{proposition}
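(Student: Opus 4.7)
My plan is to show that both expressions have vanishing $x$-derivative by combining the first-order system \eqref{3.6} with the adjoint symmetry relation \eqref{1.2}. Write $\bV_x=\bM(x,k)\bV$ and $\bW_x=\bM(x,-k^*)\bW$, where
$$\bM(x,k)=\begin{pmatrix}0_{n\times n}&I_n\\ \bQ(x)-k^2I_n&0_{n\times n}\end{pmatrix},$$
so that differentiating the first expression gives
$$\frac{d}{dx}\bigl[\bW(x,-k^*)^\dagger(\zs_2\otimes\bzs_3)\bV(x,k)\bigr]
=\bW(x,-k^*)^\dagger\bigl[\bM(x,-k^*)^\dagger(\zs_2\otimes\bzs_3)+(\zs_2\otimes\bzs_3)\bM(x,k)\bigr]\bV(x,k).$$
Since $(-k^*)^2$ conjugated gives $k^2$, the bracketed factor contains no spectral parameter in its diagonal $k^2$ pieces once combined.

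The core algebraic step is to verify the intertwining identity
$$\bM(x,-k^*)^\dagger(\zs_2\otimes\bzs_3)+(\zs_2\otimes\bzs_3)\bM(x,k)=0_{2n\times 2n}.$$
Using the block decomposition
$$\bM(x,-k^*)^\dagger=\begin{pmatrix}0_{n\times n}&\bQ(x)^\dagger-k^2I_n\\ I_n&0_{n\times n}\end{pmatrix},\qquad \zs_2\otimes\bzs_3=\begin{pmatrix}0_{n\times n}&-i\bzs_3\\ i\bzs_3&0_{n\times n}\end{pmatrix},$$
a direct block multiplication shows that the off-diagonal blocks of each summand vanish individually, while the diagonal blocks give $\pm i\bzs_3$ (which cancel) together with the contribution $i\bQ(x)^\dagger\bzs_3-i\bzs_3\bQ(x)$ in the $(1,1)$-position (the $k^2$-terms cancel automatically). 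Now \eqref{1.2} combined with $\bzs_3^2=I_n$ gives $\bQ^\dagger\bzs_3=\bzs_3\bQ\bzs_3\bzs_3=\bzs_3\bQ$, so this block also vanishes, and the intertwining identity holds. Consequently the $x$-derivative above is identically zero.

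For the second assertion, observe that when $k\in\R$ we have $-k^*=-k$ but more importantly $k^2=(k^*)^2$, so $\bM(x,k)^\dagger=\bM(x,-k^*)^\dagger$. Thus the $x$-invariance of $\bW(x,k)^\dagger(\zs_2\otimes\bzs_3)\bV(x,k)$ for $0\neq k\in\R$ is a direct specialization of the first assertion, and no additional argument is required.

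The only potential obstacle is bookkeeping with the dagger of $-k^*$, but since the spectral parameter enters only through $k^2I_n$ in the lower-left block, conjugation turns $(-k^*)^2$ into $k^2$, so the $k$-dependent terms cancel cleanly and the proof reduces to the single identity $\bQ^\dagger\bzs_3=\bzs_3\bQ$ provided by \eqref{1.2}.
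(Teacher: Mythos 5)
Your proof is correct and follows essentially the same route as the paper: you establish the intertwining identity $\bM(x,-k^*)^\dagger(\zs_2\otimes\bzs_3)+(\zs_2\otimes\bzs_3)\bM(x,k)=0_{2n\times 2n}$ (the paper writes the equivalent form $(\zs_2\otimes\bzs_3)\bM(x,k)(\zs_2\otimes\bzs_3)=-\bM(x,-k^*)^\dagger$, using $(\zs_2\otimes\bzs_3)^2=I_{2n}$), reduce it via \eqref{1.2} to $\bQ^\dagger\bzs_3=\bzs_3\bQ$, and then differentiate; the real-$k$ case is handled by the same observation the paper makes. The only difference is that you write out the block computation the paper leaves as ``easily verified.''
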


\begin{proof}
It is easily verified that
$$(\zs_2\otimes\bzs_3)\begin{pmatrix}0_{n\times n}&I_n\\
\bQ(x)-k^2I_n&0_{n\times n}\end{pmatrix}(\zs_2\otimes\bzs_3)
=-\begin{pmatrix}0_{n\times n}&I_n\\ \bQ(x)-(-k^*)^2I_n&0_{n\times n}
\end{pmatrix}^\dagger.$$
Then
\begin{align*}
\frac{\pt}{\pt x}&\left[\bW(x,-k^*)^\dagger(\zs_2\otimes\bzs_3)\bV(x,k)\right]\\
&=\bW(x,-k^*)^\dagger\begin{pmatrix}0_{n\times n}&I_n\\ \bQ(x)-(-k^*)^2I_n
&0_{n\times n}\end{pmatrix}^\dagger(\zs_2\otimes\bzs_3)\bV(x,k)\\
&+\bW(x,-k^*)^\dagger(\zs_2\otimes\bzs_3)\begin{pmatrix}0_{n\times n}&I_n\\
\bQ(x)-k^2I_n&0_{n\times n}\end{pmatrix}\bV(x,k)=0_{2n\times2n},
\end{align*}
as claimed. The second part follows in the same way.
\end{proof}

Let us now apply either part of Proposition \ref{th:3.1} to derive identities
for the $A$ and $B$ coefficients by equating the asymptotics as $x\to+\infty$ to
the asymptotics as $x\to-\infty$. Using the second part for $\bV=\bW=\bPhi$
we get
\begin{subequations}\label{3.8}
\begin{align}
A_{r,l}(k)^\dagger\bzs_3A_{r,l}(k)&-B_{r,l}(k)^\dagger\bzs_3B_{r,l}(k)=\bzs_3,
\label{3.8a}\\
B_{r,l}(k)^\dagger&=-\bzs_3B_{l,r}(k)\bzs_3,\label{3.8b}
\end{align}
\end{subequations}
where $0\neq k\in\R$. Using the second part for $\bV=\bW=\bF_{r,l}$, we get
\begin{equation}\label{3.9}
A_{r,l}(k)^\dagger\bzs_3B_{r,l}(-k)=B_{r,l}(k)^\dagger\bzs_3A_{r,l}(-k),
\end{equation}
where $0\neq k\in\R$. Using the second part for $\bV=\bF_l$ and $\bW=\bF_r$
we obtain
\begin{equation}\label{3.10}
A_r(k)^\dagger=\bzs_3A_l(-k)\bzs_3,\qquad B_r(k)^\dagger=-\bzs_3B_l(k)\bzs_3,
\end{equation}
where $0\neq k\in\R$. Finally, using the first part for $\bV=\bW=\bPhi$ we get
\begin{equation}\label{3.11}
A_{r,l}(-k^*)^\dagger=\bzs_3A_{l,r}(k)\bzs_3,
\end{equation}
where $0\neq k\in\C^+\cup\R$.

Introducing the {\it reflection coefficients}
\begin{equation}\label{3.12}
R_{r,l}(k)=B_{r,l}(k)A_{r,l}(k)^{-1}=-A_{l,r}(k)^{-1}B_{l,r}(-k)
\end{equation}
and the {\it transmission coefficients} $A_{r,l}(k)^{-1}$, we obtain the
{\it Riemann-Hilbert problem}
\begin{equation}\label{3.13}
\begin{pmatrix}F_l(x,-k)&F_r(x,-k)\end{pmatrix}=\begin{pmatrix}F_r(x,k)&F_l(x,k)
\end{pmatrix}\begin{pmatrix}A_r(k)^{-1}&-R_l(k)\\-R_r(k)&A_l(k)^{-1}
\end{pmatrix},
\end{equation}
where the matrix $\bS(k)$ containing the $A$ and $R$ quantities is called the
{\it scattering matrix} and the discussion of the nonsingularity of $A_{r,l}(k)$
will be presented shortly. Then it is easily verified that
\begin{equation}\label{3.14}
R_{r,l}(k)^\dagger=\bzs_3R_{r,l}(-k)\bzs_3,
\end{equation}
and
\begin{equation}\label{3.15}
\bS(k)^\dagger(\bzs_3\oplus\bzs_3)\bS(k)=\bzs_3\oplus\bzs_3,
\end{equation}
provided $0\neq k\in\R$ and $\det A_{r,l}(k)\neq0$.

Above we have defined $\bzD_{r,l}$ as follows:
$$\bzD_{r,l}=\lim_{k\to0}\,2ikA_{r,l}(k)=-\lim_{k\to0^\pm}\,2ikB_{r,l}(k),$$
where the first limit may be taken from the upper half-plane. Then the matrices
$\bzD_{r,l}$ have the same determinant. If $\bzD_{r,l}$ is nonsingular, we say
that we are in the {\it generic case}; if instead $\bzD_{r,l}$ is singular,
we are in the {\it exceptional case}. We are said to be in the
{\it superexceptional case} if $\bzD_{r,l}=0_{n\times n}$ and $A_{r,l}(k)$ tends
to a nonsingular matrix, $A_{r,l}(0)$ say, as $k\to0$ from within $\C^+\cup\R$.

Throughout this article, we assume the absence of {\it spectral singularities},
i.e., the absence of nonzero real $k$ for which $\det A_{r,l}(k)=0$. Under this
condition the reflection coefficients $R_{r,l}(k)$ are continuous in
$0\neq k\in\R$.

\subsection{Triangular representations and Wiener algebras}\label{sec:3.2}

The Jost solutions allow the triangular representations
\begin{subequations}\label{3.16}
\begin{align}
 F_l(x,k)&=e^{ikx}I_n+\int_x^\infty dy\,e^{iky}K(x,y),\label{3.16a}\\
 F_r(x,k)&=e^{-ikx}I_n+\int_{-\infty}^x dy\,e^{-iky}J(x,y),\label{3.16b}
\end{align}
\end{subequations}
where for every $x\in\R$
\begin{equation}\label{3.17}
\int_x^\infty dy\,\|K(x,y)\|+\int_{-\infty}^x dy\,\|J(x,y)\|<+\infty.
\end{equation}
To derive \eqref{3.16}, we introduce the auxiliary matrix functions $K(x,y)$ and
$J(x,y)$ as the unique solutions of the Volterra integral equations
\begin{subequations}\label{3.18}
\begin{align}
K&(x,y)=\hf\int_{\hf[x+y]}^\infty dz\,\bQ(z)+\hf\int_x^\infty dz\,\bQ(z)
\int_{\max(z,x+y-z)}^{z+y-x}d\hy\,K(z,\hy)\nonumber\\
&=\hf\int_{\hf[x+y]}^\infty dz\,\bQ(z)+\hf\int_x^y dz\int_{x+\hf(y-z)}^\infty
dw\,\bQ(w)K(w,w+z-x),\label{3.18a}\\
J&(x,y)=\hf\int_{-\infty}^{\hf[x+y]}dz\,\bQ(z)+\hf\int_{-\infty}^x dz\,\bQ(z)
\int_{z+y-x}^{\min(z,x+y-z)}d\hy\,J(z,\hy)\nonumber\\
&=\hf\int_{-\infty}^{\hf[x+y]}dz\,\bQ(z)+\hf\int_y^x dz
\int_{-\infty}^{x-\hf[z-y]}dw\,\bQ(w)J(w,w+z-x).\label{3.18b}
\end{align}
\end{subequations}
Solving \eqref{3.18} by iteration we easily derive \eqref{3.16} and
\eqref{3.17}. Moreover,
\begin{equation}\label{3.19}
K(x,x)=\hf\int_x^\infty dy\,\bQ(y),\qquad J(x,x)=\hf\int_{-\infty}^x dy\,\bQ(y).
\end{equation}

Let us now introduce some necessary terminology and well-known results on
 Fourier transforms of $L^1$-functions. By the (continuous) Wiener algebra $\CW$
we mean the complex vector space of constants plus Fourier transforms of
$L^1$-functions
$$\CW=\{c+\hat{h}:c\in\C,\ h\in L^1(\R)\}$$
endowed with the norm $|c|+\|h\|_1$. Here we define the Fourier transform as
follows: $(\CF h)(k)=\hat{h}(k)=\int_{-\infty}^\infty dy\,e^{ik y}h(y)$.
The invertible elements of the commutative Banach algebra $\CW$ with unit
element are exactly those $c+\hat{h}\in\CW$ for which $c\neq0$ and
$c+\hat{h}(k)\neq0$ for each $k\in\R$ \cite{GRS}.

The algebra $\CW$ has the two closed subalgebras $\CW^+$ and $\CW^-$ consisting
of those $c+\hat{h}\in\CW$ for which $h$ is supported on $\R^+$ and $\R^-$,
respectively. The invertible elements of $\CW^\pm$ are exactly those
$c+\hat{h}\in\CW^\pm$ for which $c\neq0$ and $c+\hat{h}(k)\neq0$ for each
$k\in\C^\pm\cup\R$ \cite{GRS}. Letting $\CW^\pm_0$ and $\CW_0$ stand for the
(nonunital) closed subalgebras of $\CW^\pm$ and $\CW$ consisting of those
$c+\hat{h}$ for which $c=0$, we obtain the direct sum decompositions
$$\CW=\C\oplus\CW^+_0\oplus\CW^-_0,\qquad\CW_0=\CW^+_0\oplus\CW^-_0.$$

By $\Pi_\pm$ we now denote the (bounded) projections of $\CW$ onto $\CW^\pm_0$
along $\C\oplus\CW^\mp_0$. Then $\Pi_+$ and $\Pi_-$ are complementary
projections. In fact,
\begin{equation}\label{3.20}
(\Pi_\pm f)(k)=\frac{1}{2\pi i}\int_{-\infty}^\infty d\zz\,
\frac{f(\zz)}{\zz-(k\pm i0^+)},
\end{equation}
where $f\in\CW_0\cap L^p(\R)$ for some $p\in(1,+\infty)$. These direct sum
decompositions can be coupled by the Fourier transform to the natural direct
sum decomposition of $L^1(\R)$ as follows:
$$\begin{array}{ccccc}
L^1(\R)&=&L^1(\R^-)&\oplus&L^1(\R^+)\\
\Big\downarrow\vcenter{\rlap{$\CF$}}&&\Big\downarrow\vcenter{\rlap{$\CF$}}&&
\Big\downarrow\vcenter{\rlap{$\CF$}}\\
\CW_0&=&\CW^-_0&\oplus&\CW^+_0
\end{array}$$

Throughout this article we denote the vector spaces of $n\times m$ matrices with
entries in $\CW$, $\CW^\pm$, and $\CW^\pm_0$ by $\CW^{n\times m}$,
${\CW^\pm}^{n\times m}$, and ${\CW^\pm_0}^{n\times m}$, respectively. We write
$L^1(\R)^{n\times m}$ and $L^1(\R^\pm)^{n\times m}$ for the vector spaces of
$n\times m$ matrices with entries in $L^1(\R)$ and $L^1(\R^\pm)$, respectively.
Using a submultiplicative matrix norm, we can turn all of these vector spaces
into Banach spaces. It is then clear that $\CW^{n\times n}$ and
${\CW^\pm}^{n\times n}$ are noncommutative Banach algebras with unit element and
${\CW^\pm_0}^{n\times n}$ are (nonunital) noncommutative Banach algebras. The
projections $\Pi^\pm$ can be extended in a natural way to act on matrices of
Wiener algebra elements. Hence, according to \eqref{3.16} and \eqref{3.17}, for
each $x\in\R$ the Faddeev functions $m_{r,l}(x,\cdot)\in\CW_+^{n\times n}$.

 For $n=1$ the following result is most easily proved using the Gelfand theory
of commutative Banach algebras \cite{GRS}. The extension to arbitrary $n$ is
elementary, because the determinant and the cofactor matrix of an element of
$\CW^{n\times n}$ belong to $\CW$ and $\CW^{n\times n}$, respectively.

\begin{theorem}\label{th:3.2}
If for some complex $n\times n$ matrix $H_\infty$ and some
$H\in L^1(\R)^{n\times n}$ the Fourier transform
$H_\infty+\int_{-\infty}^\infty dz\,e^{ik z}H(z)$ is a nonsingular matrix for
every $k\in\R$ and if $\det H_\infty\neq0$, then there exists
$K\in L^1(\R)^{n\times n}$ such that
$$\left[H_\infty+\int_{-\infty}^\infty dz\,e^{ik z}H(z)\right]^{-1}
=(H_\infty)^{-1}+\int_{-\infty}^\infty dz\,e^{ik z}K(z)$$
for every $k\in\R$.
\end{theorem}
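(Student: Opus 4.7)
The plan is to reduce the matrix statement to the scalar Wiener theorem in $\CW$ via Cramer's rule, writing $M(k)^{-1} = (\det M(k))^{-1}\,\mathrm{adj}(M(k))$, where I set $M(k) = H_\infty + \hat{H}(k)$.

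First I dispose of the scalar case $n=1$ by the Gelfand theory of the unital commutative Banach algebra $\CW$. Its maximal ideal space consists of the characters $c + \hat{h} \mapsto c + \hat{h}(k)$ indexed by $k \in \R$, together with the ``character at infinity'' $c + \hat{h} \mapsto c$, the latter being well defined because $\hat{h}(k) \to 0$ by the Riemann--Lebesgue lemma. An element of $\CW$ is invertible if and only if it is nonvanishing at every character \cite{GRS}. Under the hypotheses $\det H_\infty \neq 0$ (nonvanishing at $\infty$) and $M(k) \neq 0$ for $k \in \R$ (nonvanishing at every real character), this yields $M^{-1} \in \CW$. Since $M^{-1}(k) \to H_\infty^{-1}$ as $|k|\to\infty$, the difference $M^{-1} - H_\infty^{-1}$ lies in $\CW_0$, which is precisely the Fourier image of $L^1(\R)$.

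For arbitrary $n$, I exploit that $\det$ and $\mathrm{adj}$ are polynomial expressions in the matrix entries, so the decomposition $\CW = \C \oplus \CW_0$ extended entrywise yields $\det M(k) = \det H_\infty + \hat{g}(k)$ and $\mathrm{adj}(M(k)) = \mathrm{adj}(H_\infty) + \hat{C}(k)$ for some $g \in L^1(\R)$ and $C \in L^1(\R)^{n\times n}$. The scalar case applied to $\det M$ gives $(\det M(k))^{-1} = (\det H_\infty)^{-1} + \hat{h}(k)$ with $h \in L^1(\R)$. Multiplying and collecting,
\begin{equation*}
M(k)^{-1} = H_\infty^{-1} + \hat{h}(k)\,\mathrm{adj}(H_\infty) + \frac{\hat{C}(k)}{\det H_\infty} + \hat{h}(k)\,\hat{C}(k),
\end{equation*}
where the three non-constant terms are entrywise in $\CW_0$; the cross term uses that $L^1(\R)$ is closed under convolution so that $\hat{h}\,\hat{C}$ is again the Fourier transform of an $L^1$ matrix function. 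This gives the desired representation with $K \in L^1(\R)^{n\times n}$.

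The only substantive step is the scalar Wiener theorem itself, a classical application of Gelfand theory invoked via \cite{GRS}; the passage from $n=1$ to general $n$ is purely algebraic, relying on the fact that $\CW_0$ is a closed ideal of $\CW$ and that its entrywise analogue is closed under matrix multiplication in $\CW^{n\times n}$.
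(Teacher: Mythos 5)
Your proof is correct and follows exactly the route the paper indicates (the paper only sketches it in the paragraph preceding the theorem): the scalar case via Gelfand theory of $\CW$, then the extension to general $n$ through $\det M$ and the adjugate, both of which lie in $\CW$ and $\CW^{n\times n}$ respectively. Your write-up merely supplies the details of identifying the constant terms and handling the convolution in the cross term, all of which are sound.
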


Using that $m_{r,l}(x,\cdot)\in\CW_+^{n\times n}$, we easily prove with the help
of \eqref{3.4} that $2ik[I_n-A_{r,l}(k)]$ belongs to $\CW_+^{n\times n}$ and
$2ikB_{r,l}(k)$ belongs to $\CW^{n\times n}$. Assuming the absence of spectral
singularities and to be in the generic case, we easily prove that the reflection
coefficients $R_{r,l}(k)$ belong to $\CW_0^{n\times n}$ and the transmission
coefficients $A_{r,l}(k)^{-1}$ to $\CW_+^{n\times n}$. Indeed,
$$\frac{k}{k+1}A_{r,l}(k)=\frac{k}{k+i}I_n-\frac{2ik[I_n-A_{r,l}(k)]}{2i(k+i)}$$
belongs to $\CW_+^{n\times n}$ and therefore, by Theorem \ref{th:3.2} and
$k/(k+i)\in\CW$,
$$A_{r,l}(k)^{-1}=\frac{k}{k+i}\left[\frac{k}{k+i}A_{r,l}(k)\right]^{-1}$$
belongs to $\CW^{n\times n}$. Analogously, since
$$\frac{k}{k+i}B_{r,l}(k)=\frac{2ikB_{r,l}(k)}{2i(k+i)}$$
belongs to $\CW^{n\times n}$, we easily see that
$$R_{r,l}(k)=\left[\frac{k}{k+i}B_{r,l}(k)\right]\left[\frac{k}{k+i}A_{r,l}(k)
\right]^{-1}$$
belongs to $\CW_0^{n\times n}$.

The exceptional case requires more attention. Assuming that the potential
$\bQ\in L^1(\R;(1+|x|)^2dx)$, we can differentiate \eqref{3.2} with respect to
$k$ and solve the resulting integral equations by iteration. Next, with the help
of \eqref{3.4} we obtain
$$A_{r,l}(k)+B_{r,l}(k)=I_n+\int_{-\infty}^\infty dy\,\frac{e^{\mp2iky}-1}{2ik}
\bQ(y)m_{r,l}(y,k).$$
This expression satisfies
$$\lim_{k\to0^\pm}[A_{r,l}(k)+B_{r,l}(k)]=I_n\mp\int_{-\infty}^\infty dy\,
y\bQ(y)m_{r,l}(y,0)=I_n\mp E_{r,l},$$
provided $\bQ\in L^1(\R;(1+|x|)^2dx)$. Taking the limit as $k\to0^\pm$ we get
\begin{subequations}\label{3.21}
\begin{align}
\lim_{\substack{k\to0\\k\in\C^+\cup\R}}\frac{2ikA_{r,l}(k)-\bzD_{r,l}}{2ik}
&=I_n-\frac{1}{2i}\int_{-\infty}^\infty dy\,\bQ(y)\dot{m}_{r,l}(y,0)\nonumber\\
&=I_n-\frac{1}{2i}G_{r,l},\label{3.21a}\\
\lim_{k\to0^\pm}\frac{2ikB_{r,l}(k)+\bzD_{r,l}}{2ik}&=\int_{-\infty}^\infty dy\,
\bQ(y)[\mp ym_{r,l}(y,0)+\tfrac{1}{2i}\dot{m}_{r,l}(y,0)]\nonumber\\
&=\mp E_{r,l}+\frac{1}{2i}G_{r,l},\label{3.21b}
\end{align}
\end{subequations}
where the overdot indicates differentiation with respect to $k$.

In the superexceptional case, where $\bzD_{r,l}=0_{n\times n}$, it is clear that
$A_{r,l}\in\CW_+^{n\times n}$, provided $\bQ\in L^1(\R;(1+|x|)^2dx)$. Assuming
the absence of spectral singularities, using the nonsingularity of $A_{r,l}(0)$,
and applying Theorem \ref{th:3.2}, we get the transmission coefficients
$A_{r,l}(k)^{-1}$ to belong to $\CW^{n\times n}$. Since $A_{r,l}(k)+B_{r,l}(k)$
belongs to $\CW^{n\times n}$, we conclude that
$$R_{r,l}(k)=[A_{r,l}(k)+B_{r,l}(k)]A_{r,l}(k)^{-1}-I_n\in\CW_0^{n\times n}.$$

At present it is not known if, under the absence of spectral singularities, the
reflection and transmission coefficients belong to $\CW^{n\times n}$ in any
other exceptional case and for general $\bQ\in L^1(\R;(1+|x|)dx)$. Under the
condition $\bQ\in L^1(\R;(1+|x|)dx)$, the continuity of the reflection and
transmission coefficients at $k=0$ is known for $n=1$ \cite{K88} and for
selfadjoint potentials \cite{AKV01}.

\section{Inverse scattering problem}\label{sec:4}

In this section we introduce the Marchenko integral equations for the matrix
Schr\"o\-din\-ger equation \eqref{1.5} with Faddeev class potential $\bQ$
satisfying \eqref{1.2}. We make use of the hypothesis that the reflection
coefficients $R_{r,l}\in\CW_0^{n\times n}$, something that we have not been able
to prove in the most general exceptional case. For the sake of simplicity we
assume that the poles of $A_{r,l}(k)^{-1}$ in $\C^+$ are simple. The extension
to multiple pole situations is rather technical but straightforward \cite{DM1}.
Inverse scattering theory is well documented in the scalar case
\cite{Fa64,DT79,CD82,CS89,AC}, as it is in the matrix half-line case
\cite{AM,AW18,AW20}. Without any details, the matrix full-line case is given in
\cite{Wd}.

Let us first write the {\it transmission coefficients} in the form
\begin{equation}\label{4.1}
A_r(k)^{-1}=A_{r0}(k)+\sum_{s=1}^N\,\frac{\tau_{r;s}}{k-k_s},\quad
A_l(k)^{-1}=A_{l0}(k)+\sum_{s=1}^N\,\frac{\tau_{l;s}}{k-k_s},
\end{equation}
where $k_1,\ldots,k_N$ are the distinct simple poles of $A_{r,l}(k)^{-1}$
in $\C^+$, $\tau_{r;s}$ and $\tau_{l;s}$ are the residues of $A_r(k)^{-1}$ and
$A_l(k)^{-1}$ at $k=k_s$ ($s=1,\ldots,N$), and $A_{r0}(k)$ and $A_{l0}(k)$ are
continuous in $k\in\C^+\cup\R$, are analytic in $k\in\C^+$, and tend to $I_n$ as
$k\to\infty$ from within $\C^+\cup\R$.

Letting $\tau_{r;s}$ and $\tau_{r;\os}$ (or: $\tau_{l;s}$ and $\tau_{l;\os}$)
stand for the residues of $A_r(k)^{-1}$ (or: $A_l(k)^{-1}$) at $k_s$ and
$-k_s^*$, we get with the help of \eqref{3.11}
\begin{align}
\tau_{r;\os}&=\lim_{k\to-k_s^*}\,(k+k_s^*)A_r(k)^{-1}
=\bzs_3\lim_{k\to-k_s^*}\,(k+k_s^*){A_l(-k^*)^\dagger}^{-1}\bzs_3\nonumber\\
&=-\bzs_3\left(\lim_{\zz\to k_s}\,(\zz-k_s)A_l(\zz)^{-1}\right)^\dagger\bzs_3
=-\bzs_3\tau_{l;s}^\dagger\bzs_3.\label{4.2}
\end{align}
Similarly, we get $\tau_{l;\os}=-\bzs_3\tau_{r;s}^\dagger\bzs_3$.

Next, let us consider the Riemann-Hilbert problem
\begin{equation}\label{4.3}
\begin{pmatrix}F_l(x,-k)&F_r(x,-k)\end{pmatrix}
=\begin{pmatrix}F_r(x,k)&F_l(x,k)\end{pmatrix}
\begin{pmatrix}A_r(k)^{-1}&-R_l(k)\\-R_r(k)&A_l(k)^{-1}\end{pmatrix},
\end{equation}
where $\det A_{r,l}(k)\neq0$ for $0\neq k\in\R$. In the generic and
superexceptional cases while assuming the absence of spectral singularities,
we write
\begin{equation}\label{4.4}
R_r(k)=\int_{-\infty}^\infty d\za\,e^{-ik\za}\hat{R}_r(\za),\qquad
R_l(k)=\int_{-\infty}^\infty d\za\,e^{ik\za}\hat{R}_l(\za),
\end{equation}
where the entries of the $n\times n$ matrix functions $\hat{R}_{r,l}(\za)$
belong to $L^1(\R)$. In fact, we only need to assume the absence of spectral
singularities as well as $R_{r,l}\in\CW_0^{n\times n}$ for the derivations below
to be valid. Rewriting the left half of \eqref{4.3} we obtain
\begin{align}
e^{ik x}&F_l(x,-k)=e^{ik x}F_r(x,k)A_r(k)^{-1}-e^{-ik x}F_l(x,k)
e^{2ik x}R_r(k)\nonumber\\&\!=\!e^{ik x}F_r(x,k)A_{r0}(k)+\sum_{s=1}^N\,
\frac{e^{ik x}F_r(x,k)-e^{ik_sx}F_r(x,k_s)}{k-k_s}\tau_{r;s}\nonumber\\
&+i\sum_{s=1}^N\,e^{2ik_sx}\frac{e^{-ik_sx}F_l(x,k_s)}{k-k_s}N_{r;s}
-e^{-ik x}F_l(x,k)e^{2ik x}R_r(k),\label{4.5}
\end{align}
where the {\it norming constants} $N_{r;s}$ ($s=1,\ldots,N$) are defined by
\begin{equation}\label{4.6}
 F_r(x,k_s)\tau_{r;s}=i\,F_l(x,k_s)N_{r;s},\qquad s=1,\ldots,N.
\end{equation}
By the same token, rewriting the right half of \eqref{4.3} we get
\begin{align}
e^{-ik x}&F_r(x,-k)=e^{-ik x}F_l(x,k)A_l(k)^{-1}-e^{ik x}F_r(x,k)
e^{-2ik x}R_l(k)\nonumber\\&\!=\!e^{-ik x}F_l(x,k)A_{l0}(k)+\sum_{s=1}^N\,
\frac{e^{-ik x}F_l(x,k)-e^{-ik_sx}F_l(x,k_s)}{k-k_s}\tau_{l;s}
\nonumber\\
&+i\sum_{s=1}^N\,e^{-2ik_s}\frac{e^{ik_sx}F_r(x,k_s)}{k-k_s}N_{l;s}
-e^{ik x}F_r(x,k)e^{-2ik x}R_l(k),\label{4.7}
\end{align}
where the norming constants $N_{l;s}$ ($s=1,\ldots,N$) are defined by
\begin{equation}\label{4.8}
 F_l(x,k_s)\tau_{l;s}=i\,F_r(x,k_s)N_{l;s},\qquad s=1,\ldots,N.
\end{equation}
 For every $x\in\R$ the identities \eqref{4.5} and \eqref{4.7} are equations in
$\CW^{n\times n}$. Using $\Pi_-$ to project these two equations onto
$\CW_-^{n\times n}$ along $\CW_{+,0}^{n\times n}$, we obtain
\begin{align*}
e^{ik x}F_l(x,-k)&=I_n+i\sum_{s=1}^N e^{2ik_sx}
\frac{e^{-ik_sx}F_l(x,k_s)}{k-k_s}N_{r;s}\nonumber\\
&-\Pi_-\left[e^{-ik x}F_l(x,k)e^{2ik x}R_r(k)\right],\\
e^{-ik x}F_r(x,-k)&=I_n+i\sum_{s=1}^N e^{-2ik_sx}
\frac{e^{ik_sx}F_r(x,k_s)}{k-k_s}N_{l;s}\nonumber\\
&-\Pi_-\left[e^{ik x}F_r(x,k)e^{-2ik x}R_l(k)\right]\!,
\end{align*}
respectively. Using \eqref{3.16}, \eqref{4.4}, and the identity
\begin{equation}\label{4.9}
\int_0^\infty dw\,e^{-ik w}e^{ik_sw}=\frac{-i}{k-k_s},\qquad
s=1,2,\ldots,N,
\end{equation}
we strip off the Fourier transforms and arrive at the Marchenko integral
equations
\begin{subequations}\label{4.10}
\begin{align}
K(x,y)+\zO_r(x+y)&+\int_x^\infty dz\,K(x,z)\zO_r(z+y)=0_{n\times n},
\label{4.10a}\\
J(x,y)+\zO_l(x+y)&+\int_{-\infty}^x dz\,J(x,z)\zO_l(z+y)=0_{n\times n},
\label{4.10b}
\end{align}
\end{subequations}
where the {\it Marchenko integral kernels} are given by
\begin{subequations}\label{4.11}
\begin{align}
\zO_r(w)&=\hat{R}_r(w)+\sum_{s=1}^N\,e^{ik_sw}N_{r;s},\label{4.11a}\\
\zO_l(w)&=\hat{R}_l(w)+\sum_{s=1}^N\,e^{-ik_sw}N_{l;s}.\label{4.11b}
\end{align}
\end{subequations}

Indeed, to prove \eqref{4.10a} and \eqref{4.11a} we compute
\begin{align*}
&\int_0^\infty d\za\,e^{-ik\za}K(x,x+\za)\\&=
-\int_0^\infty d\za\,e^{-ik\za}\sum_{s=1}^N\left[e^{ik_s(\za+2x)}N_{r;s}
+\int_0^\infty d\zb\,e^{ik_s(\za+\zb+2x)}K(x,x+\zb)N_{r;s}\right]\\
&-\int_0^\infty d\za\,e^{-ik\za}\left[\hat{R}_r(\za+2x)+\int_0^\infty d\zb\,
K(x,x+\zb)\hat{R}_r(\za+\zb+2x)\right].
\end{align*}
Stripping off the Fourier transform and putting $y=x+\za$ and $z=x+\zb$, we get
\eqref{4.10a} and \eqref{4.11a}. Analogously, to prove \eqref{4.10b} and
\eqref{4.11b} we compute
\begin{align*}
&\int_0^\infty d\za\,e^{-ik\za}J(x,x-\za)\\&=-\int_0^\infty d\za\,e^{-ik\za}
\sum_{s=1}^N\left[e^{-ik_s(2x-\za)}N_{l;s}\!+\!\int_0^\infty d\zb\,
e^{-ik_s(2x-\za-\zb)}J(x,x-\zb)N_{l;s}\right]\\
&-\int_0^\infty d\za\,e^{-ik\za}\left[\hat{R}_l(2x-\za)+\int_0^\infty d\zb\,
J(x,x-\zb)\hat{R}_l(2x-\za-\zb)\right].
\end{align*}
Stripping off the Fourier transform and putting $y=x-\za$ and $z=x-\zb$, we get
\eqref{4.10b}.

As in \cite{DM1}, we can prove the adjoint symmetry relations
\begin{equation}\label{4.12}
\zO_{r,l}(w)=\bzs_3\zO_{r,l}(w)^\dagger\bzs_3,
\end{equation}
thus implying the following symmetry relations for the norming constants:
\begin{equation}\label{4.13}
N_{r;s}=\bzs_3N_{r;s}^\dagger\bzs_3,\qquad
N_{l;s}=\bzs_3N_{l;s}^\dagger\bzs_3.
\end{equation}
 For the rather tedious details we refer to Appendix~\ref{sec:B}.

\section{AKNS vs. matrix Schr\"o\-din\-ger data}\label{sec:5}

Let us define the Jost solutions of the AKNS system \eqref{1.3} as those
solutions of \eqref{1.3} that for $k\in\R$ satisfy the asymptotic
relations
\begin{subequations}\label{5.1}
\begin{alignat}{3}
\F_l(x,k)&=e^{-ikx\bzs_3}[I_n+o(1)],&\qquad&x\to+\infty,\label{5.1a}\\
\F_r(x,k)&=e^{-ikx\bzs_3}[I_n+o(1)],&\qquad&x\to-\infty.\label{5.1b}
\end{alignat}
\end{subequations}
Since we are dealing with a first order system, there exist nonsingular matrices
$\A_l(k)$ and $\A_r(k)$ such that
\begin{equation}\label{5.2}
\F_r(x,k)=\F_l(x,k)\A_r(k),\qquad\F_l(x,k)=\F_r(x,k)\A_l(k),
\end{equation}
and hence $\A_l(k)$ and $\A_r(k)$ are each other's inverses. Equations
\eqref{5.1} and \eqref{5.2} imply the asymptotic relations
\begin{subequations}\label{5.3}
\begin{alignat}{3}
\F_l(x,k)&=e^{-ikx\bzs_3}\A_l(k)[I_n+o(1)],&\qquad&x\to-\infty,\label{5.3a}\\
\F_r(x,k)&=e^{-ikx\bzs_3}\A_r(k)[I_n+o(1)],&\qquad&x\to+\infty.\label{5.3b}
\end{alignat}
\end{subequations}
Since the matrix $-ik\bzs_3+\CQ$ in \eqref{1.3} has trace $-ik(m_1-m_2)$, the
determinants of $\F_l(x,k)$ and $\F_r(x,k)$ equal $e^{-ik(m_1-m_2)x}$. As
a result, $\det\A_{r,l}(k)=1$.

Let us relate the AKNS and matrix Schr\"o\-din\-ger Jost solutions. Putting
$$\be_1=I_{m_1}\oplus 0_{m_2\times m_2},\qquad
\be_2=0_{m_1\times m_1}\oplus I_{m_2},$$
we obtain from the principal asymptotic relations
\begin{subequations}\label{5.4}
\begin{align}
\F_l(x,k)&=F_l(x,-k)\be_1+F_l(x,k)\be_2,\label{5.4a}\\
\F_r(x,k)&=F_r(x,k)\be_1+F_r(x,-k)\be_2,\label{5.4b}
\end{align}
\end{subequations}
as well as
\begin{subequations}\label{5.5}
\begin{align}
 F_l(x,k)&=\F_l(x,-k)\be_1+\F_l(x,k)\be_2,\label{5.5a}\\
 F_r(x,k)&=\F_r(x,k)\be_1+\F_r(x,-k)\be_2.\label{5.5b}
\end{align}
\end{subequations}
Using $\be_1\be_2=\be_2\be_1=0_{n\times n}$ and $\be_1^2+\be_2^2=I_n$ it is easy
to derive \eqref{5.5} from \eqref{5.4} and vice versa. Using
\eqref{5.4} we obtain the triangular representations for the AKNS system
\begin{subequations}\label{5.6}
\begin{align}
\F_l(x,k)&=e^{-ikx\bzs_3}+\int_x^\infty dy\,K(x,y)e^{-iky\bzs_3},\label{5.6a}\\
\F_r(x,k)&=e^{-ikx\bzs_3}+\int_{-\infty}^x dy\,J(x,y)e^{-iky\bzs_3},
\label{5.6b}
\end{align}
\end{subequations}
where the kernel functions $K$ and $J$ coincide with those for the matrix
Schr\"o\-din\-ger system.

Let us now relate the AKNS and matrix Schr\"o\-din\-ger scattering coefficients.
Let us introduce the standard partitioning
\begin{equation}\label{5.7}
H=\begin{pmatrix}H_1&H_2\\H_3&H_4\end{pmatrix}
\end{equation}
of an $n\times n$ matrix $H$ into the $m_1\times m_1$ matrix $H_1$, the
$m_1\times m_2$ matrix $H_2$, the $m_2\times m_1$ matrix $H_3$, and the
$m_2\times m_2$ matrix $H_4$. Using the asymptotic relations \eqref{3.3} and
\eqref{5.3} we obtain for $k\in\R$
\begin{align*}
e^{ikx}A_l(k)+e^{-ikx}B_l(k)&=\begin{pmatrix}e^{ikx}\A_{l1}(-k)
&e^{-ikx}\A_{l2}(k)\\e^{-ikx}\A_{l3}(-k)&e^{ikx}\A_{l4}(k)\end{pmatrix},\\
e^{-ikx}A_r(k)+e^{ikx}B_r(k)
&=\begin{pmatrix}e^{-ikx}\A_{r1}(k)&e^{ikx}\A_{r2}(-k)\\
e^{ikx}\A_{r3}(k)&e^{-ikx}\A_{r4}(-k)\end{pmatrix}.
\end{align*}
Consequently, for $0\neq k\in\R$ we have
\begin{subequations}\label{5.8}
\begin{alignat}{3}
A_l(k)&=\begin{pmatrix}\A_{l1}(-k)&0_{m_1\times m_2}\\
0_{m_2\times m_1}&\A_{l4}(k)\end{pmatrix},&\quad
A_r(k)&=\begin{pmatrix}\A_{r1}(k)&0_{m_1\times m_2}\\
0_{m_2\times m_1}&\A_{r4}(-k)\end{pmatrix},\label{5.8a}\\
B_l(k)&=\begin{pmatrix}0_{m_1\times m_1}&\A_{l2}(k)\\
\A_{l3}(-k)&0_{m_2\times m_2}\end{pmatrix},&\quad
B_r(k)&=\begin{pmatrix}0_{m_1\times m_1}&\A_{r2}(-k)\\
\A_{r3}(k)&0_{m_2\times m_2}\end{pmatrix}.\label{5.8b}
\end{alignat}
\end{subequations}
Conversely, for $0\neq k\in\R$ we get in terms of blocks of $A_{r,l}(k)$ and
$B_{r,l}(k)$
\begin{subequations}\label{5.9}
\begin{align}
\A_l(k)&=\begin{pmatrix}A_{l1}(k)&B_{l2}(k)\\
B_{l3}(-k)&A_{l4}(k)\end{pmatrix},\label{5.9a}\\
\A_r(k)&=\begin{pmatrix}A_{r1}(k)&B_{r2}(-k)\\
B_{r3}(k)&A_{r4}(-k)\end{pmatrix}.\label{5.9b}
\end{align}
\end{subequations}
Since $\A_l(k)$ and $\A_r(k)$ are continuous in $k\in\R$ and tend to $I_n$ as
$k\to\pm\infty$, we see that $A_l(k)$, $B_l(k)$, $A_r(k)$, and $B_r(k)$ are
continuous in $k\in\R$ and have finite limits as $k\to\pm\infty$. Of course,
the block entries of $\A_{l,r}(k)$ have all the usual continuity and analyticity
properties, also at $k=0$. Thus, we are always in the exceptional case. We are
in the superexceptional case iff $k=0$ is not a spectral singularity of the AKNS
system \eqref{1.3}. Further, the matrix Schr\"o\-din\-ger reflection
coefficients are given by
\begin{subequations}\label{5.10}
\begin{align}
R_l(k)&=\begin{pmatrix}0_{m_1\times m_1}&\R_{l1}(k)\\
\R_{l4}(k)&0_{m_2\times m_2}\end{pmatrix}
=\begin{pmatrix}0_{m_1\times m_1}&\A_{l2}(k)\A_{l4}(k)^{-1}\\
\A_{l3}(-k)\A_{l1}(-k)^{-1}&0_{m_2\times m_2}\end{pmatrix},\label{5.10a}\\
R_r(k)&=\begin{pmatrix}0_{m_1\times m_1}&\R_{r4}(k)\\
\R_{r1}(k)&0_{m_2\times m_2}\end{pmatrix}
=\begin{pmatrix}0_{m_1\times m_1}&\A_{r2}(-k)\A_{r4}(-k)^{-1}\\
\A_{r3}(k)\A_{r1}(k)^{-1}&0_{m_2\times m_2}\end{pmatrix}.\label{5.10b}
\end{align}
\end{subequations}
 Finally, the AKNS system is reflectionless iff the matrix Schr\"o\-din\-ger
system is reflectionless.

Let $\tau_{r;s}$ and $\tau_{l;s}$ be the residues of $A_r(k)^{-1}$ and
$A_l(k)^{-1}$ at the simple pole $k_s\in\C^+$. Now let $\tau_{l4;s}$ and
$\tau_{r1;s}$ be the residues of $\A_{l4}(k)^{-1}$ and $\A_{r1}(k)^{-1}$ at
$k=k_s$ and $\tau_{l1;s}$ and $\tau_{r4;s}$ the residues of $\A_{l1}(k)^{-1}$
and $\A_{r4}(k)^{-1}$ at the simple pole $k=-k_s$. Then \eqref{5.8a} implies
that
$$\tau_{r;s}=\tau_{r1;s}\oplus(-\tau_{r4;s}),\qquad
\tau_{l;s}=(-\tau_{l1;s})\oplus\tau_{l4;s}.$$
Let us define the AKNS norming constants as follows:
\begin{subequations}\label{5.11}
\begin{align}
\F_r(x,k_s)\be_1\tau_{r1;s}&=i\,\F_l(x,k_s)\be_2\N_{r1;s},\label{5.11a}\\
\F_r(x,-k_s)\be_2\tau_{r4;s}&=-i\,\F_l(x,-k_s)\be_1\N_{r4;s},\label{5.11b}\\
\F_l(x,k_s)\be_2\tau_{l4;s}&=i\,\F_r(x,k_s)\be_1\N_{l4;s},\label{5.11c}\\
\F_l(x,-k_s)\be_1\tau_{l1;s}&=-i\,\F_r(x,-k_s)\be_2\N_{l1;s}.\label{5.11d}
\end{align}
\end{subequations}
We then obtain
\begin{alignat}{3}
N_{r;s}&=\begin{pmatrix}0_{m_1\times m_1}&\N_{r4;s}\\
\N_{r1;s}&0_{m_2\times m_2}\end{pmatrix},\qquad
N_{l;s}&=\begin{pmatrix}0_{m_1\times m_1}&\N_{l1;s}\\
\N_{l4;s}&0_{m_2\times m_2}\end{pmatrix}.\label{5.12}
\end{alignat}

\section{Time evolution}\label{sec:6}

In this section we establish the time evolution of the scattering data of the
matrix Schr\"o\-din\-ger equation and the accompanying AKNS system. We then go
on to derive, in either case, the Marchenko integral kernels as a function of
time. These results allow us, in the next section, to derive the reflectionless
solutions of the nonlocal integrable equation \eqref{2.3} and hence of the
focusing matrix NLS equation.

\subsection{Matrix Schr\"odinger time evolution}\label{sec:6.1}

The Lax pair $\{\CL,A\}$ for the matrix Schr\"o\-din\-ger equation \eqref{2.3}
is given by \eqref{2.1}, where $\CB_0=\bQ$. Equation \eqref{2.3} is compatible
with the linear system
$$\CL v=k^2v,\qquad v_t=Av,$$
where $\CL=-\pt_x^2+\bQ$. We may therefore write
\begin{align*}
v_t&=Av=2i\bzs_3v_{xx}-2i\bzs_3\CQ v_x-i\bzs_3\bQ v\nonumber\\
&=2i\bzs_3(\bQ-k^2\bone)v-2i\bzs_3\CQ v_x-i\bzs_3\bQ v\nonumber\\
&=i\bzs_3\left\{(\bQ-2k^2\bone)v-2\CQ v_x\right\},
\end{align*}
where $\bone$ stands for the identity operator on a suitable function space.
Next, we compute
\begin{align*}
(v_x)_t&=(Av)_x=i\bzs_3\left((\bQ-2k^2\bone)v_x+\bQ_xv-2\CQ_xv_x
-2\CQ(\bQ-k^2\bone)v\right)\nonumber\\
&=i\bzs_3\left(\bQ_x-2\CQ\bQ+2k^2\CQ\right)v
+i\bzs_3(\bQ-2k^2\bone-2\CQ_x)v_x.
\end{align*}
Hence, putting $\bV=\left(\begin{smallmatrix}v\\v_x\end{smallmatrix}\right)$
we get the linear system
\begin{equation}\label{6.1}
\bV_x=\bX(x,k;t)\bV,\qquad\bV_t=\bT(x,k;t)\bV,
\end{equation}
where
\begin{subequations}\label{6.2}
\begin{align}
\bX(x,k;t)&=\begin{pmatrix}0_{n\times n}&I_n\\ \bQ(x;t)-k^2I_n&0_{n\times n}
\end{pmatrix},\label{6.2a}\\
\bT(x,k;t)&=\begin{pmatrix}i\bzs_3(\bQ-2k^2I_n)&-2i\bzs_3\CQ\\
i\bzs_3(\bQ_x-2\CQ\bQ+2k^2\CQ)&i\bzs_3(\bQ-2k^2I_n-2\CQ_x)
\end{pmatrix}.\label{6.2b}
\end{align}
\end{subequations}
Then we easily compute
$$i(\bzs_3\oplus\bzs_3)\left(\bX_t-\bT_x+\bX\bT-\bT\bX\right)
=\begin{pmatrix}0_{n\times n}&0_{n\times n}\\
\bE_{21}&0_{n\times n}\end{pmatrix},$$
where
\begin{align*}
\bE_{21}&=i\bzs_3\bQ_t+\bQ_{xx}-2(\CQ\bQ)_x+2k^2\CQ_x\\
&-\bzs_3(\bQ-k^2I_n)\bzs_3(\bQ-2k^2I_n)
+(\bQ-2k^2I_n-2\CQ_x)(\bQ-k^2I_n)\\
&=i\bzs_3\bQ_t+\bQ_{xx}+\bQ^2-\bzs_3\bQ\bzs_3\bQ-2\CQ\bQ_x-4\CQ_x\bQ\\
&+k^2\left(2\CQ_x+2\bzs_3\bQ\bzs_3+\bQ-\bQ-2\bQ+2\CQ_x\right)\\
&=i\bzs_3\bQ_t+\bQ_{xx}+\bQ^2-\bzs_3\bQ\bzs_3\bQ-2\CQ\bQ_x
-2\left(\bQ^2-\bzs_3\bQ\bzs_3\bQ\right)\\
&=i\bzs_3\bQ_t+\bQ_{xx}-\bQ^2+\bzs_3\bQ\bzs_3\bQ-2\CQ\bQ_x,
\end{align*}
as claimed. Thus the zero curvature condition for the AKNS pair $\{\bX,\bT\}$
is equivalent to the nonlinear evolution equation \eqref{2.3}.

There exist nonsingular matrices $C_{\bF_l}(k;t)$ and $C_{\bF_r}(k;t)$ not
depending on $k\in\R$ such that
$$\bF_l(x,k;t)=\bV(x,k;t)C_{\bF_l}(k;t)^{-1},\qquad
\bF_r(x,k;t)=\bV(x,k;t)C_{\bF_r}(k;t)^{-1}.$$
Then a simple differentiation yields
\begin{align*}
\left[C_{\bF_l}(k;t)\right]_tC_{\bF_l}(k;t)^{-1}&=\bF_l^{-1}\bT\bF_l
-\bF_l^{-1}[\bF_l]_t,\\
\left[C_{\bF_r}(k;t)\right]_tC_{\bF_r}(k;t)^{-1}&=\bF_r^{-1}\bT\bF_r
-\bF_r^{-1}[\bF_r]_t,
\end{align*}
where the two left-hand sides do not depend on $x\in\R$. We may therefore
compute the $x\to\pm\infty$ limits of the two right-hand sides and obtain
$$\left[C_{\bF_l}(k;t)\right]_tC_{\bF_l}(k;t)^{-1}
=\left[C_{\bF_r}(k;t)\right]_tC_{\bF_r}(k;t)^{-1}=-2ik^2(\bzs_3\oplus\bzs_3).$$
Consequently,
\begin{subequations}\label{6.3}
\begin{align}
C_{\bF_l}(k;t)&=\left[e^{-2ik^2t\bzs_3}\oplus e^{-2ik^2t\bzs_3}\right]
C_{\bF_l}(k;0),\label{6.3a}\\
C_{\bF_r}(k;t)&=\left[e^{-2ik^2t\bzs_3}\oplus e^{-2ik^2t\bzs_3}\right]
C_{\bF_r}(k;0).\label{6.3b}
\end{align}
\end{subequations}
Relating $\bF_{r,l}(x,k;t)$ by means of the equalities
$$\bF_r(x,k;t)=\bF_l(x,k;t)\bA_r(k;t),\qquad
\bF_l(x,k;t)=\bF_r(x,k;t)\bA_l(k;t),$$
where the factors $\bA_{r,l}(k;t)$ are given by the matrices [cf. \eqref{3.5}]
$$\bA_r(k;t)=\begin{pmatrix}A_r(k;t)&B_r(-k;t)\\B_r(k;t)&A_r(-k;t)\end{pmatrix},
\quad
\bA_l(k;t)=\begin{pmatrix}A_l(-k;t)&B_l(k;t)\\B_l(-k;t)&A_l(k;t)\end{pmatrix},$$
we compute
\begin{align*}
[\bA_r]_t&=-\bF_l^{-1}[\bF_l]_t\bF_l^{-1}\bF_r+\bF_l^{-1}[\bF_r]_t\\
&=-\bF_l^{-1}\left(\bT\bF_l
-\bF_l\left[C_{\bF_l}(k;t)\right]_tC_{\bF_l}(k;t)^{-1}\right)\bA_r\\
&+\bF_l^{-1}\left(\bT\bF_r
-\bF_r\left[C_{\bF_r}(k;t)\right]_tC_{\bF_r}(k;t)^{-1}\right)\\
&=\left[C_{\bF_l}(k;t)\right]_tC_{\bF_l}(k;t)^{-1}\bA_r
-\bA_r\left[C_{\bF_l}(k;t)\right]_tC_{\bF_l}(k;t)^{-1}\\
&=2ik^2\left(\bA_r(\bzs_3\oplus\bzs_3)-(\bzs_3\oplus\bzs_3)\bA_r\right).
\end{align*}
Using that $\bA_{r,l}(k;t)$ are each other's inverses, we get
$$\bA_{r,l}(k;t)=\left[e^{-2ik^2t\bzs_3}\oplus e^{-2ik^2t\bzs_3}\right]
\bA_{r,l}(k;0)\left[e^{2ik^2t\bzs_3}\oplus e^{2ik^2t\bzs_3}\right].$$
Therefore,
\begin{subequations}\label{6.4}
\begin{align}
A_{r,l}(k;t)&=e^{-2ik^2t\bzs_3}A_{r,l}(k;0)e^{2ik^2t\bzs_3},\label{6.4a}\\
B_{r,l}(k;t)&=e^{-2ik^2t\bzs_3}B_{r,l}(k;0)e^{2ik^2t\bzs_3}.\label{6.4b}
\end{align}
\end{subequations}
 For the reflection coefficients we get
\begin{equation}\label{6.5}
R_{r,l}(k;t)=e^{-2ik^2t\bzs_3}R_{r,l}(k;0)e^{2ik^2t\bzs_3}.
\end{equation}
Moreover, if $k_s$ is a simple pole of $A_{r,l}(k;t)$ in $\C^+$, \eqref{6.4a}
implies that
$$\tau_{r;s}(t)=e^{-2ik_s^2t\bzs_3}\tau_{r;s}(0)e^{2ik_s^2t\bzs_3},\qquad
\tau_{l;s}(t)=e^{-2ik_s^2t\bzs_3}\tau_{l;s}(0)e^{2ik_s^2t\bzs_3}.$$
As a result,
\begin{equation}\label{6.6}
[\tau_{r;s}]_t=2ik_s^2\left(\tau_{r;s}\bzs_3-\bzs_3\tau_{r;s}\right),\qquad
[\tau_{l;s}]_t=2ik_s^2\left(\tau_{l;s}\bzs_3-\bzs_3\tau_{l;s}\right).
\end{equation}

Let us now study the time evolution of the norming constants corresponding to
a simple pole $k_s$ of $A_r(k;t)^{-1}$ in $\C^+$. Equations \eqref{6.3} imply
\begin{equation}\label{6.7}
[\bF_{r,l}]_t=\bT\bF_{r,l}+2ik^2\bF_{r,l}(\bzs_3\oplus\bzs_3).
\end{equation}
Now recall \eqref{4.6} and \eqref{4.8} in the form
\begin{subequations}\label{6.8}
\begin{align}
 F_r(x,k_s;t)\tau_{r;s}&=iF_l(x,k_s;t)N_{r;s}(t),\label{6.8a}\\
 F_l(x,k_s;t)\tau_{l;s}&=iF_r(x,k_s;t)N_{l;s}(t).\label{6.8b}
\end{align}
\end{subequations}
Using the right/left upper block of \eqref{6.7} for $k=k_s$ to differentiate
\eqref{6.8} with respect to $t$ for $k=k_s$ and then substituting \eqref{6.6}
and \eqref{6.8}, we get
\begin{align*}
 F_l(x,k_s;t)\left\{[N_{r;s}]_t-2ik_s^2\left(N_{r;s}\bzs_3-\bzs_3N_{r;s}
\right)\right\}&=0_{n\times n},\\
 F_r(x,k_s;t)\left\{[N_{l;s}]_t-2ik_s^2\left(N_{l;s}\bzs_3-\bzs_3N_{l;s}
\right)\right\}&=0_{n\times n}.
\end{align*}
Using \eqref{3.1} we obtain
\begin{align*}
[N_{r;s}]_t&=2ik_s^2\left(N_{r;s}(t)\bzs_3-\bzs_3N_{r;s}(t)\right),\\
[N_{l;s}]_t&=2ik_s^2\left(N_{l;s}(t)\bzs_3-\bzs_3N_{l;s}(t)\right).
\end{align*}
Consequently,
\begin{subequations}\label{6.9}
\begin{align}
N_{r;s}(t)&=e^{-2ik_s^2t\bzs_3}N_{r;s}(0)e^{2ik_s^2t\bzs_3},\label{6.9a}\\
N_{l;s}(t)&=e^{-2ik_s^2t\bzs_3}N_{l;s}(0)e^{2ik_s^2t\bzs_3}.\label{6.9b}
\end{align}
\end{subequations}

\subsection{AKNS time evolution}\label{sec:6.2}

Equations \eqref{6.4a} and \eqref{5.8a}, applied for $k\in\R$ and then extended
by analytic continuation, imply that $\A_{l4}(k;t)$ and $\A_{r1}(k;t)$ for
$k\in\C^+\cup\R$ and $\A_{l1}(k;t)$ and $\A_{r4}(k;t)$ for $k\in\C^-\cup\R$ are
time independent. Equations \eqref{6.4b} and \eqref{5.8b} imply
\begin{subequations}\label{6.10}
\begin{alignat}{3}
\A_{l2}(k;t)&=e^{-4ik^2t}\A_{l2}(k;0),&\qquad
\A_{l3}(k;t)&=e^{4ik^2t}\A_{l3}(k;0),\label{6.10a}\\
\A_{r2}(k;t)&=e^{-4ik^2t}\A_{r2}(k;0),&\qquad
\A_{r3}(k;t)&=e^{4ik^2t}\A_{r3}(k;0),\label{6.10b}
\end{alignat}
\end{subequations}
where $k\in\R$. Equations \eqref{6.5} and \eqref{5.10} imply
\begin{subequations}\label{6.11}
\begin{alignat}{3}
\R_{l1}(k;t)&=e^{-4ik^2t}\R_{l1}(k;0),&\qquad
\R_{l4}(k;t)&=e^{4ik^2t}\R_{l4}(k;0),\label{6.11a}\\
\R_{r1}(k;t)&=e^{4ik^2t}\R_{r1}(k;0),&\qquad
\R_{r4}(k;t)&=e^{-4ik^2t}\R_{r4}(k;0),\label{6.11b}
\end{alignat}
\end{subequations}
where $k\in\R$. Finally, \eqref{6.9} and \eqref{5.12} imply
\begin{subequations}\label{6.12}
\begin{alignat}{3}
\N_{l1;s}(t)&=e^{-4ik_s^2t}\N_{l1;s}(0),&\qquad
\N_{l4;s}(t)&=e^{4ik_s^2t}\N_{l4;s}(0),\label{6.12a}\\
\N_{r1;s}(t)&=e^{4ik_s^2t}\N_{r1;s}(0),&\qquad
\N_{r4;s}(t)&=e^{-4ik_s^2t}\N_{r4;s}(0),\label{6.12b}
\end{alignat}
\end{subequations}
where $k_s$ is a simple pole of $A_{r,l}(k;t)^{-1}$ in $\C^+$.

\subsection{Marchenko kernels and time evolution}\label{sec:6.3}

The Marchenko integral kernels of the matrix Schr\"o\-din\-ger system can be
expressed in terms of those of the AKNS system as follows:
\begin{equation}\label{6.13}
\zO_r(w;t)=\begin{pmatrix}0_{m_1\times m_1}&\zO_{r4}(w;t)\\ \zO_{r1}(w;t)
&0_{m_2\times m_2}\end{pmatrix},\quad
\zO_l(w;t)=\begin{pmatrix}0_{m_1\times m_1}&\zO_{l1}(w;t)\\ \zO_{l4}(w;t)
&0_{m_2\times m_2}\end{pmatrix},
\end{equation}
where in the case of simple poles
\begin{subequations}\label{6.14}
\begin{align}
\zO_{r1}(w;t)&=\hat{R}_{r1}(w;t)
+\sum_{s=1}^N\,e^{ik_sw}e^{4ik_s^2t}N_{r1;s}(0),\label{6.14a}\\
\zO_{r4}(w;t)&=\hat{R}_{r4}(w;t)
+\sum_{s=1}^N\,e^{ik_sw}e^{-4ik_s^2t}N_{r4;s}(0),\label{6.14b}\\
\zO_{l1}(w;t)&=\hat{R}_{l1}(w;t)
+\sum_{s=1}^N\,e^{-ik_sw}e^{-4ik_s^2t}N_{l1;s}(0),\label{6.14c}
\end{align}
\begin{align}
\zO_{l4}(w;t)&=\hat{R}_{l4}(w;t)
+\sum_{s=1}^N\,e^{-ik_sw}e^{4ik_s^2t}N_{l4;s}(0).\label{6.14d}
\end{align}
\end{subequations}
Consequently, for simple poles we obtain
\begin{subequations}\label{6.15}
\begin{align}
\zO_r(w;t)&=\hat{R}_r(w;t)+\sum_{s=1}^N\,e^{ik_sw}e^{-4ik_s^2t\bzs_3}N_{r;s}(0),
\label{6.15a}\\
\zO_l(w;t)&=\hat{R}_l(w;t)+\sum_{s=1}^N\,e^{-ik_sw}e^{-4ik_s^2t\bzs_3}
N_{l;s}(0),\label{6.15b}
\end{align}
\end{subequations}
which we could also have obtained directly from \eqref{4.11} and \eqref{6.13}.
As a result, we have arrived at the linear PDEs
\begin{equation}\label{6.16}
[\zO_r]_t=4i\bzs_3[\zO_r]_{\za\za},\qquad
[\zO_l]_t=4i\bzs_3[\zO_l]_{\za\za},
\end{equation}
in accordance with \eqref{6.6}.

Let us discuss the adjoint symmetry relations of the Marchenko integral kernels.
According to Corollary \ref{th:B.4}, those of the matrix Schr\"o\-din\-ger
equation read
$$\zO_r(w;t)^\dagger=\bzs_3\zO_r(w;t)\bzs_3,\qquad
\zO_l(w;t)^\dagger=\bzs_3\zO_l(w;t)\bzs_3.$$
Using \eqref{6.13} this yields the adjoint symmetry relations for the AKNS
system
\begin{subequations}\label{6.17}
\begin{alignat}{3}
\zO_{r1}(w;t)^\dagger&=-\zO_{r4}(w;t),&\qquad
\zO_{r4}(w;t)^\dagger&=-\zO_{r1}(w;t),\label{6.17a}\\
\zO_{l1}(w;t)^\dagger&=-\zO_{l4}(w;t),&\qquad
\zO_{l4}(w;t)^\dagger&=-\zO_{l1}(w;t).\label{6.17b}
\end{alignat}
\end{subequations}

\section{Multisoliton solutions}\label{sec:7}

In this section we apply the matrix triplet method to write the Marchenko
integral kernels in separated form and solve them by separation of variables.
This method has been successfully applied to the Korteweg-de Vries (KdV)
equation \cite{AdC,AV1}, the NLS equation \cite{ADM,DM2}, the sine-Gordon
equation \cite{Sb2,ADM2}, the modified Korteweg-de Vries (mKdV) equation
\cite{D2}, and the Toda lattice equation \cite{Sb1}. An introduction to this
method can be found in \cite{SIMAI}.

In the reflectionless case we can write the Marchenko integral kernels of the
AKNS system as follows:
\begin{subequations}\label{7.1}
\begin{alignat}{3}
\zO_{r1}(w;t)&=\CC_re^{-w\CA_r}e^{-4it\CA_r^2}\CB_r,&\qquad\zO_{r4}(w;t)
&=-\CB_r^\dagger e^{-w\CA_r^\dagger}e^{4it{\CA_r^\dagger}^2}\CC_r^\dagger,
\label{7.1a}\\
\zO_{l1}(w;t)&=\CC_le^{w\CA_l}e^{4it\CA_l^2}\CB_l,&\qquad\zO_{l4}(w;t)
&=-\CB_l^\dagger e^{w\CA_l^\dagger}e^{-4it{\CA_l^\dagger}^2}\CC_l^\dagger,
\label{7.1b}
\end{alignat}
\end{subequations}
where we have taken into account the adjoint symmetry relations \eqref{6.17}.
Here $(\CA_r,\CB_r,\CC_r)$ and $(\CA_l,\CB_l,\CC_l)$ are matrix triplets, where all of the
eigenvalues of $\CA_r$ and $\CA_l$ have a positive real part. Let us now define
\begin{alignat*}{4}
\bA_r&=\begin{pmatrix}\CA_r^\dagger&0_{p\times p}\\0_{p\times p}&\CA_r
\end{pmatrix},&\quad
\bB_r&=\begin{pmatrix}0_{p\times m_1}&-\CC_r^\dagger\\ \CB_r&0_{p\times m_2}
\end{pmatrix},&\quad
\bC_r&=\begin{pmatrix}\CB_r^\dagger&0_{m_1\times p}\\0_{m_2\times p}&\CC_r
\end{pmatrix},\\
\bA_l&=\begin{pmatrix}\CA_l&0_{p\times p}\\0_{p\times p}&\CA_l^\dagger
\end{pmatrix},&\quad
\bB_l&=\begin{pmatrix}0_{p\times m_1}&\CB_l\\-\CC_l^\dagger&0_{p\times m_2}
\end{pmatrix},&\quad
\bC_l&=\begin{pmatrix}\CC_l&0_{m_1\times p}\\0_{m_2\times p}&\CB_l^\dagger
\end{pmatrix},
\end{alignat*}
where $\CA_r$ and $\CA_l$ are both matrices of order $p$ whose eigenvalues have
a positive real part. Putting $\tzs_3=\zs_3\otimes I_p=I_p\oplus(-I_p)$,
we obtain
\begin{subequations}\label{7.2}
\begin{align}
\zO_r(w;t)&=\bC_re^{-w\bA_r}e^{4it\tzs_3\bA_r^2}\bB_r,\label{7.2a}\\
\zO_l(w;t)&=\bC_le^{w\bA_l}e^{4it\tzs_3\bA_l^2}\bB_l,\label{7.2b}
\end{align}
\end{subequations}
where
\begin{equation}\label{7.3}
\tzs_3\bA_{r,l}=\bA_{r,l}\tzs_3,\quad
\bB_{r,l}\bzs_3=-\tzs_3\bB_{r,l},\quad\bzs_3\bC_{r,l}=\bC_{r,l}\tzs_3.
\end{equation}

Solving the Marchenko integral equations we get
\begin{subequations}\label{7.4}
\begin{align}
K(x,y,t)&=-\bW_r(x;t)e^{-y\bA_r}e^{4it\tzs_3\bA_r^2}\bB_r,\label{7.4a}\\
J(x,y,t)&=-\bW_l(x;t)e^{y\bA_l}e^{4it\tzs_3\bA_l^2}\bB_l,\label{7.4b}
\end{align}
\end{subequations}
where
\begin{align*}
\bW_r(x;t)&=\bC_re^{-x\bA_r}+\int_x^\infty dz\,K(x,z,t)\bC_re^{-z\bA_r},\\
\bW_l(x;t)&=\bC_le^{x\bA_l}+\int_{-\infty}^x dz\,J(x,z,t)\bC_le^{z\bA_l}.
\end{align*}
Substituting \eqref{7.4} into \eqref{4.10} and solving \eqref{4.10} for
$\bW_{r,l}(x;t)$ we get
\begin{align*}
\bW_r(x;t)&=\bC_re^{-x\bA_r}\left[I_{2p}+e^{-x\bA_r}e^{4it\tzs_3\bA_r^2}\bP_r
e^{-x\bA_r}\right]^{-1},\\
\bW_l(x;t)&=\bC_le^{x\bA_l}\left[I_{2p}+e^{x\bA_l}e^{4it\tzs_3\bA_l^2}
\bP_le^{x\bA_l}\right]^{-1},
\end{align*}
provided the inverse matrices exist. Here
\begin{align*}
\bP_r&=\int_0^\infty dz\,e^{-z\bA_r}\bB_r\bC_re^{-z\bA_r}=\begin{pmatrix}
0_{p\times p}&\CQ_{r,\zS}\\ \CN_{r,\Xi}&0_{p\times p}\end{pmatrix},\\
\bP_l&=\int_0^\infty dz\,e^{-z\bA_l}\bB_l\bC_le^{-z\bA_l}=\begin{pmatrix}
0_{p\times p}&\CN_{l,\Xi}\\ \CQ_{l,\zS}&0_{p\times p}\end{pmatrix},
\end{align*}
are the unique solutions of the Sylvester equations
\begin{equation}\label{7.5}
\bA_r\bP_r+\bP_r\bA_r=\bB_r\bC_r,\qquad\bA_l\bP_l+\bP_l\bA_l=\bB_l\bC_l.
\end{equation}
More precisely, given $(x,t)\in\R^2$, the Marchenko integral equations
\eqref{4.10} are uniquely solvable (in an $L^1$-setting) iff the algebraic
equations for $\bW_{r,l}(x;t)$ are uniquely solvable. Consequently,
\begin{subequations}\label{7.6}
\begin{align}
K(x,y,t)&=-\bC_re^{-x\bA_r}
\left[I_{2p}+e^{-x\bA_r}e^{4it\tzs_3\bA_r^2}\bP_re^{-x\bA_r}\right]^{-1}
e^{-y\bA_r}e^{4it\tzs_3\bA_r^2}\bB_r\nonumber\\
&=-\bC_r\left[I_{2p}+e^{-2x\bA_r}e^{4it\tzs_3\bA_r^2}\bP_r\right]^{-1}
e^{-(x+y)\bA_r}e^{4it\tzs_3\bA_r^2}\bB_r\nonumber\\
&=-\bC_r\bPi_r(x;t)^{-1}e^{-(y-x)\bA_r}\bB_r,\label{7.6a}\\
J(x,y,t)&=-\bC_le^{x\bA_l}
\left[I_{2p}+e^{x\bA_l}e^{4it\tzs_3\bA_l^2}\bP_le^{x\bA_l}\right]^{-1}
e^{y\bA_l}e^{4it\tzs_3\bA_l^2}\bB_l\nonumber\\
&=-\bC_l\left[I_{2p}+e^{2x\bA_l}e^{4it\tzs_3\bA_l^2}\bP_l\right]^{-1}
e^{(x+y)\bA_l}e^{4it\tzs_3\bA_l^2}\bB_l\nonumber\\
&=-\bC_l\bPi_l(x;t)^{-1}e^{-(x-y)\bA_l}\bB_l,\label{7.6b}
\end{align}
\end{subequations}
where
\begin{equation}\label{7.7}
\bPi_r(x;t)=e^{2x\bA_r}e^{-4it\tzs_3\bA_r^2}+\bP_r,\qquad
\bPi_l(x;t)=e^{-2x\bA_l}e^{-4it\tzs_3\bA_l^2}+\bP_l.
\end{equation}
Observe that, for $(x,t)\in\R^2$, the matrices $\bPi_{r,l}(x;t)$ are invertible
iff the Marchenko integral equations \eqref{4.10} are uniquely solvable. Using
\eqref{3.5} [with $\bQ=\CQ^2+\CQ_x$ instead of $\CQ$] and \eqref{3.19}, we get
\begin{subequations}\label{7.8}
\begin{align}
-\CQ(x;t)+\int_x^\infty dy\,\CQ(y;t)^2&=2K(x,x;t)
=-2\bC_r\bPi_r(x;t)^{-1}\bB_r,\label{7.8a}\\
\CQ(x;t)+\int_{-\infty}^x dy\,\CQ(y;t)^2&=2J(x,x;t)
=-2\bC_l\bPi_l(x;t)^{-1}\bB_l,\label{7.8b}
\end{align}
\end{subequations}
where $\bPi_r(x;t)$ and $\bPi_l(x;t)$ are defined by \eqref{7.7}. The matrix
NLS solution then equals the $\mp$ block off-diagonal parts of \eqref{7.8},
where we recover the explicit (matrix) NLS solutions obtained before by the
matrix triplet method \cite{ADM,DM2,SIMAI}. Differentiating \eqref{7.8} with
respect to $x$ and using \eqref{1.4} we obtain the following solutions of the
nonlocal integrable equation \eqref{2.2}:
\begin{subequations}\label{7.9}
\begin{align}
\bQ(x;t)&=-\frac{d}{dx}\left(-2\bC_r\bPi_r(x;t)^{-1}\bB_r\right)\nonumber\\
&=-4\bC_r\bPi_r(x;t)^{-1}\bA_re^{2x\bA_r}e^{-4it\tzs_3\bA_r^2}\bPi_r(x;t)^{-1}
\bB_r,\label{7.9a}\\
\bQ(x;t)&=+\frac{d}{dx}\left(-2\bC_l\bPi_l(x;t)^{-1}\bB_l\right)\nonumber\\
&=-4\bC_l\bPi_l(x;t)^{-1}\bA_le^{-2x\bA_l}e^{-4it\tzs_3\bA_r^2}
\bPi_l(x;t)^{-1}\bB_l.\label{7.9b}
\end{align}
\end{subequations}


\section*{Acknowledgments}
The authors have been partially supported by the Regione Autonoma della Sardegna
research project {\it Algorithms and Models for Imaging Science} [AMIS]
(RASSR57257, intervento finanziato con risorse FSC 2014-2020 -- Patto per lo
Sviluppo della Regione Sardegna), and by INdAM-GNFM.


\appendix

\section{Deriving the nonlocal integrable equation}\label{sec:A}

In this appendix we present an alternative proof of \eqref{2.3}. Indeed, let us
start from the matrix NLS equation
$$i\bzs_3\CQ_t+\CQ_{xx}-2\CQ^3=0_{n\times n},$$
where we seek solutions $\CQ$ that anticommute with $\bzs_3$. Putting
$$\bQ=\CQ^2+\CQ_x,$$
so that
\begin{align*}
\bQ_x&=\CQ_x\CQ+\CQ\CQ_x+\CQ_{xx},\\
\bQ_{xx}&=\CQ_{xx}\CQ+2\CQ_x^2+\CQ\CQ_{xx}+\CQ_{xxx},
\end{align*}
we obtain
\begin{align*}
-i\bzs_3\bQ_t&=-(i\bzs_3\CQ_t)\CQ+\CQ(i\bzs_3\CQ_t)-(i\bzs_3\CQ_t)_x\\
&=(\CQ_{xx}-2\CQ^3)\CQ-\CQ(\CQ_{xx}-2\CQ^3)+\CQ_{xxx}-2(\CQ^3)_x\\
&=\CQ_{xx}\CQ-\CQ\CQ_{xx}+\CQ_{xxx}-2\CQ_x\CQ^2-2\CQ\CQ_x\CQ-2\CQ^2\CQ_x\\
&=(\bQ_x-\CQ_x\CQ-\CQ\CQ_x)\CQ-\CQ(\bQ_x-\CQ_x\CQ-\CQ\CQ_x)\\
&+(\bQ_{xx}-\CQ_{xx}\CQ-2\CQ_x^2-\CQ\CQ_{xx})
-2\CQ_x\CQ^2-2\CQ\CQ_x\CQ-2\CQ^2\CQ_x\\
&=\bQ_x\CQ-\CQ\bQ_x+\bQ_{xx}\\
&-3\CQ_x\CQ^2-2\CQ\CQ_x\CQ-\CQ^2\CQ_x-\CQ_{xx}\CQ-2\CQ_x^2-\CQ\CQ_{xx}\\
&=\bQ_x\CQ-\CQ\bQ_x+\bQ_{xx}-\bQ_x\CQ-\CQ\bQ_x-2\CQ_x\CQ^2-2\CQ_x^2\\
&=\bQ_{xx}-2\CQ\bQ_x-2\CQ_x\CQ^2-2\CQ_x^2\\
&=\bQ_{xx}-2\CQ\bQ_x-2\CQ_x\bQ,
\end{align*}
where $\CQ_x=\hf(\bQ-\bzs_3\bQ\bzs_3)$ and hence
$$\CQ(x;t)=-\int_x^\infty dy\,\hf\left(\bQ-\bzs_3\bQ\bzs_3\right)
=\int_{-\infty}^x dy\,\hf\left(\bQ-\bzs_3\bQ\bzs_3\right).$$
Here the integrand depends on $(y,t)\in\R^2$. We have thus reproduced
\eqref{2.3} by a more direct method.

\section{Adjoint symmetry of Marchenko kernels}\label{sec:B}

In this appendix we prove the adjoint symmetry relations \eqref{4.12} for the
Marchenko integral kernels and \eqref{4.13} for the norming constants. Even
though these relations are easily established for scalar Schr\"o\-din\-ger
equations \cite{Fa64,DT79,CD82,CS89} or for matrix Schr\"o\-din\-ger equations
without discrete eigenvalues (where they are immediate from \eqref{3.14}), they
require considerably more effort in the general matrix Schr\"o\-din\-ger case.

Let $G_l(x,k)$ and $G_r(x,k)$ be the $n\times n$ matrix solutions of the dual
matrix Schr\"o\-din\-ger equation
\begin{equation}\label{B.1}
-\bphi^{\prime\prime}(x,k)+\bphi(x,k)\bQ(x)=k^2\bphi(x,k)
\end{equation}
under the asymptotic conditions
\begin{subequations}\label{B.2}
\begin{alignat}{3}
G_l(x,k)&=e^{ik x}\left[I_n+o(1)\right],&\qquad&x\to+\infty,\label{B.2a}\\
G_r(x,k)&=e^{-ik x}\left[I_n+o(1)\right],&\qquad&x\to-\infty.\label{B.2b}
\end{alignat}
\end{subequations}
Then the dual Faddeev functions satisfy the Volterra integral equations
\begin{subequations}\label{B.3}
\begin{align}
e^{-ik x}G_l(x,k)&=I_n+\int_x^\infty dy\,
\frac{e^{2ik(y-x)}-1}{2ik}e^{-ik y}G_l(y,k)\bQ(y),\label{B.3a}\\
e^{ik x}G_r(x,k)&=I_n+\int_{-\infty}^x dy\,
\frac{e^{2ik(x-y)}-1}{2ik}e^{ik y}G_r(y,k)\bQ(y).\label{B.3b}
\end{align}
\end{subequations}
In analogy with Proposition \ref{th:3.1} we prove

\begin{proposition}\label{th:B.1}
The dual Faddeev functions $e^{-ik x}G_l(x,k)$ and $e^{ik x}G_r(x,k)$ are
continuous in $k\in\C^+\cup\R$, are analytic in $k\in\C^+$, and tend to $I_n$ as
$k\to\infty$ from within $\C^+\cup\R$, irrespective of the choice of $x\in\R$.
\end{proposition}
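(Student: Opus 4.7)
The plan is to imitate the argument used (and standard in the literature) to prove the analogous statement for the ordinary Faddeev functions $m_{r,l}(x,k)$ defined by \eqref{3.2}. The integral equations \eqref{B.3} are Volterra equations of exactly the same structural form as \eqref{3.2}; the only difference is that the unknown multiplies $\bQ(y)$ on the right instead of being multiplied by $\bQ(y)$ on the left, which is harmless as far as norm estimates are concerned. I would therefore set $\zz_l(x,k)=e^{-ikx}G_l(x,k)$ and $\zz_r(x,k)=e^{ikx}G_r(x,k)$ and solve \eqref{B.3} by Picard iteration.

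First I would establish the kernel bound
$$
\left|\frac{e^{2ik(y-x)}-1}{2ik}\right|\le y-x\qquad(y\ge x,\ k\in\C^+\cup\R),
$$
which follows from writing $e^{2ik(y-x)}-1=2ik\int_0^{y-x}e^{2iks}\,ds$ and observing that $|e^{2iks}|=e^{-2\,\mathrm{Im}(k)s}\le1$ for $s\ge0$ and $k\in\C^+\cup\R$. An analogous bound holds for the $G_r$ equation with $y\le x$. This yields, by standard induction on the $n$-th Picard iterate $\zz_l^{(n)}$ of \eqref{B.3a}, the pointwise estimate
$$
\|\zz_l^{(n)}(x,k)-\zz_l^{(n-1)}(x,k)\|\le\frac{1}{n!}\left(\int_x^\infty(y-x)\,\|\bQ(y)\|\,dy\right)^{\!n},
$$
and the right-hand side is finite and summable because $\bQ\in L^1(\R;(1+|x|)\,dx)$. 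Hence the Neumann series for $\zz_l(x,k)$ converges absolutely and uniformly in $k\in\C^+\cup\R$ for each fixed $x$, and uniformly in $x$ on any half-line $[x_0,\infty)$.

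Each Picard iterate is readily checked to be continuous in $k\in\C^+\cup\R$ and analytic in $k\in\C^+$, because the factor $(e^{2ik(y-x)}-1)/(2ik)$ has those properties and is integrated against an $L^1$ weight in $y$ (justified by Morera's theorem together with the Fubini and dominated-convergence theorems). Uniform convergence of the series then transfers these properties to the limit $\zz_l(x,k)$, and symmetrically to $\zz_r(x,k)$.

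Finally, for the $k\to\infty$ behaviour I would apply dominated convergence: the kernel $(e^{2ik(y-x)}-1)/(2ik)$ tends to $0$ pointwise in $y>x$ as $k\to\infty$ within $\C^+\cup\R$, while being dominated by $(y-x)$; multiplying by $\|\bQ(y)\|$ yields the integrable majorant $(y-x)\|\bQ(y)\|$. Hence the first Picard correction vanishes in that limit, and because the higher iterates are bounded in the uniformly summable way above, each term of the Neumann series tends to $0$, forcing $\zz_l(x,k)\to I_n$. The same reasoning handles $\zz_r(x,k)$. I do not anticipate a substantive obstacle; the only mild point requiring care is ensuring that the dominating function in the last step is uniform in $k$, which is precisely what the bound $(y-x)\|\bQ(y)\|$ provides under the Faddeev class assumption.
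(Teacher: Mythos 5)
Your proposal is correct and is essentially the argument the paper intends: the paper offers no explicit proof of Proposition~\ref{th:B.1}, asserting it ``in analogy'' with the corresponding properties of the Faddeev functions $m_{r,l}(x,k)$, i.e.\ precisely the Volterra/Picard iteration of \eqref{B.3} with the kernel bound $\bigl|(e^{2ik(y-x)}-1)/(2ik)\bigr|\le y-x$ that you write out (the right-multiplication by $\bQ$ being immaterial for the norm estimates). Your treatment of the $1/n!$ simplex bound, the Morera argument, and the $k\to\infty$ limit is the standard and correct way to fill in that analogy.
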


Using the symmetry relation \eqref{1.2} in the matrix Schr\"o\-din\-ger equation
\eqref{1.5} with $k$ replaced by $-k^*$, we see that
$\bzs_3\Psi(x,-k^*)^\dagger\bzs_3$ satisfies the dual matrix Schr\"o\-din\-ger
equation \eqref{B.1} whenever $\Psi(x,k)$ satisfies the original matrix
Schr\"o\-din\-ger equation \eqref{1.5}. Equations \eqref{3.1} and \eqref{B.2}
then imply the following adjoint symmetry relations:
\begin{equation}\label{B.4}
G_{r,l}(x,k)=\bzs_3F_{r,l}(x,-k^*)^\dagger\bzs_3.
\end{equation}
These relations can also be derived directly from the Volterra integral
equations \eqref{3.2} and \eqref{B.3}. With the help of \eqref{3.3},
\eqref{3.10}, and \eqref{B.4} we obtain for $0\neq k\in\R$ the asymptotic
relations
\begin{alignat*}{3}
G_l(x,k)&=e^{ik x}A_r(k)-e^{-ik x}B_r(-k)+o(1),&\qquad&x\to-\infty,\\
G_r(x,k)&=e^{-ik x}A_l(k)-e^{ik x}B_l(-k)+o(1),&\qquad&x\to+\infty.
\end{alignat*}

Putting
\begin{align*}
\bG_r(x,k)&=\begin{pmatrix}G_r^\prime(x,-k)&-G_r(x,-k)\\
-G_r^\prime(x,k)&G_r(x,k)\end{pmatrix}
=(\zs_2\otimes\bzs_3)\bF_r(x,-k)^\dagger(\zs_2\otimes\bzs_3),\\
\bG_l(x,k)&=\begin{pmatrix}G_l^\prime(x,k)&-G_l(x,k)\\
-G_l^\prime(x,-k)&G_l(x,-k)\end{pmatrix}
=(\zs_2\otimes\bzs_3)\bF_l(x,-k)^\dagger(\zs_2\otimes\bzs_3),
\end{align*}
where $\zs_2\otimes\bzs_3=\left[\begin{smallmatrix}0_{n\times n}&-i\bzs_3\\
i\bzs_3&0_{n\times n}\end{smallmatrix}\right]$, we get for $0\neq k\in\R$
\begin{subequations}\label{B.5}
\begin{align}
\bG_r(x,k)&=\begin{pmatrix}A_l(-k)&B_l(k)\\B_l(-k)&A_l(k)
\end{pmatrix}\bG_l(x,k),\label{B.5a}\\
\bG_l(x,k)&=\begin{pmatrix}A_r(k)&B_r(-k)\\B_r(k)&A_r(-k)\end{pmatrix}
\bG_r(x,k),\label{B.5b}
\end{align}
\end{subequations}
where the matrices containing the $A$ and $B$ quantities are each other's
inverses. By analytic continuation we obtain for $k\in\C^+\cup\R$
$$\bG_{r,l}(x,k)=(\zs_2\otimes\bzs_3)\bF_{r,l}(x,-k^*)^\dagger
(\zs_2\otimes\bzs_3).$$

Using \eqref{B.5} and \eqref{3.12} we easily derive the Riemann-Hilbert problem
\begin{equation}\label{B.6}
\begin{pmatrix}G_r(x,-k)\\G_l(x,-k)\end{pmatrix}
=\begin{pmatrix}A_r(k)^{-1}&-R_l(k)\\-R_r(k)&A_l(k)^{-1}\end{pmatrix}
\begin{pmatrix}G_l(x,k)\\G_r(x,k)\end{pmatrix},
\end{equation}
where we are either in the generic case or in the superexceptional case and
there are no spectral singularities.

Assume that the poles $k_s$ of $A_r(k)^{-1}$ are simple and hence \eqref{4.1} is
true. Rewriting the bottom half of \eqref{B.6} we get
\begin{align}
e^{ik x}&G_l(x,-k)=A_{l0}(k)^{-1}e^{ik x}G_r(x,k)
-e^{2ik x}R_r(k)e^{-ik x}G_l(x,k)\nonumber\\
&=A_{l0}(k)e^{ik x}G_r(x,k)+\sum_{s=1}^N\,\tau_{l;s}
\frac{e^{ik x}G_r(x,k)-e^{ik_sx}G_r(x,k_s)}{k-k_s}\nonumber\\
&+i\sum_{s=1}^N\,\frac{e^{2ik_sx}\N_{r;s}}{k-k_s}e^{-ik x}G_l(x,k)
-e^{2ik x}R_r(k)e^{-ik x}G_l(x,k),\label{B.7}
\end{align}
where
\begin{equation}\label{B.8}
\tau_{l;s}G_r(x,k_s)=i\,\N_{r;s}G_l(x,k_s)
\end{equation}
for the ``dual" norming constants $\N_{r;s}$. By the same token,
\begin{align}
e^{-ik x}&G_r(x,-k)=A_r(k)^{-1}e^{-ik x}G_l(x,k)
-e^{-2ik x}R_l(k)e^{ik x}G_r(x,k)\nonumber\\
&=A_{r0}(k)e^{-ik x}G_l(x,k)+\sum_{s=1}^N\,\tau_{r;s}
\frac{e^{-ik x}G_l(x,k)-e^{-ik_sx}G_l(x,k_s)}{k-k_s}
\nonumber\\&+i\sum_{s=1}^N\,\frac{e^{-2ik_sx}\N_{l;s}}{k-k_s}
e^{ik x}G_r(x,k)\!-\!e^{-2ik x}R_l(k)e^{ik x}G_r(x,k),\label{B.9}
\end{align}
where
\begin{equation}\label{B.10}
\tau_{r;s}G_l(x,k_s)=i\,\N_{l;s}G_r(x,k_s)
\end{equation}
for another set of ``dual" norming constants $\N_{l;s}$. For every $x\in\R$ the
identities \eqref{B.7} and \eqref{B.9} are equations in $\CW^{n\times n}$. Using
$\Pi_-$ to project these two equations onto $\CW_-^{n\times n}$ along
$\CW_{+,0}^{n\times n}$, we obtain
\begin{subequations}\label{B.11}
\begin{align}
e^{ik x}G_l(x,-k)&=I_n+i\sum_{s=1}^N\,\frac{e^{2ik_sx}\N_{r;s}}{k-k_s}
e^{-ik x}G_l(x,k)\nonumber\\&-\Pi_-\left[e^{2ik x}R_r(k)
e^{-ik x}G_l(x,k)\right],\label{B.11a}\\
e^{-ik x}G_r(x,-k)&=I_n+i\sum_{s=1}^N\,\frac{e^{-2ik_sx}
\N_{l;s}}{k-k_s}e^{ik x}G_r(x,k)\nonumber\\
&-\Pi_-\left[e^{-2ik x}R_l(k)e^{ik x}G_r(x,k)\right].\label{B.11b}
\end{align}
\end{subequations}
Using the triangular representations
\begin{subequations}\label{B.12}
\begin{align}
e^{-ik x}G_l(x,k)&=I_n+\int_x^\infty dy\,e^{ik(y-x)}\K(y,x),\label{B.12a}\\
e^{ik x}G_r(x,k)&=I_n+\int_{-\infty}^x dy\,e^{ik(x-y)}\J(y,x),\label{B.12b}
\end{align}
\end{subequations}
in \eqref{B.11} and stripping off the Fourier transforms we get the dual
Marchenko equations
\begin{subequations}\label{B.13}
\begin{align}
\K(y,x)+\zo_r(y+x)&+\int_x^\infty dz\,\zo_r(y+z)\K(z,x)=0_{n\times n},
\label{B.13a}\\
\J(y,x)+\zo_l(y+x)&+\int_{-\infty}^x dz\,\zo_l(y+z)\J(z,x)=0_{n\times n},
\label{B.13b}
\end{align}
\end{subequations}
where the dual Marchenko kernels are given by
\begin{align*}
\zo_r(w)&=\hat{R}_r(w)+\sum_{s=1}^N\,e^{ik_sw}\N_{r;s},\\
\zo_l(w)&=\hat{R}_l(w)+\sum_{s=1}^N\,e^{-ik_sw}\N_{l;s}.
\end{align*}

\begin{theorem}\label{th:B.2}
Let the generic case or the superexceptional case be satisfied and assume there
do not exist any spectral singularities. Then the dual Marchenko kernels are
related to the Marchenko kernels as follows:
\begin{equation}\label{B.14}
\zo_r(w)=\bzs_3\zO_r(w)^\dagger\bzs_3,\qquad
\zo_l(w)=\bzs_3\zO_l(w)^\dagger\bzs_3.
\end{equation}
\end{theorem}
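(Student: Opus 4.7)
The plan is to exploit the adjoint symmetry \eqref{B.4}, $G_{r,l}(x,k)=\bzs_3F_{r,l}(x,-k^*)^\dagger\bzs_3$, which translates between primal and dual Jost functions and is the only genuinely new ingredient beyond the derivations already used to obtain \eqref{4.10} and \eqref{B.13}. The target identities will follow by writing down two integral equations that the dual Jost kernels satisfy and showing that these equations coincide.

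First I would derive the pointwise relations
$$\K(y,x)=\bzs_3K(x,y)^\dagger\bzs_3,\qquad \J(y,x)=\bzs_3J(x,y)^\dagger\bzs_3.$$
Taking the Hermitian adjoint of the triangular representation \eqref{3.16a} evaluated at $-k^*$ (with $k\in\C^+\cup\R$) and sandwiching with $\bzs_3$ gives
$$e^{-ikx}\bzs_3F_l(x,-k^*)^\dagger\bzs_3=I_n+\int_x^\infty dy\,e^{ik(y-x)}\bzs_3K(x,y)^\dagger\bzs_3.$$
The left-hand side equals $e^{-ikx}G_l(x,k)$ by \eqref{B.4}, and the right-hand side is to be compared with \eqref{B.12a}. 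Injectivity of the Fourier transform on the relevant $L^1$ space then yields the stated relation for $\K$; the argument for $\J$ is analogous, starting from \eqref{3.16b} and comparing with \eqref{B.12b}.

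The second step is to take the Hermitian adjoint of the Marchenko equation \eqref{4.10a}, multiply from both sides by $\bzs_3$, insert $I_n=\bzs_3\bzs_3$ between the two factors in the integrand, and substitute the pointwise relation just obtained. The outcome is the integral equation
$$\K(y,x)+\bzs_3\zO_r(x+y)^\dagger\bzs_3+\int_x^\infty dz\,\bzs_3\zO_r(z+y)^\dagger\bzs_3\,\K(z,x)=0_{n\times n},$$
which is formally identical to \eqref{B.13a} but with $\bzs_3\zO_r(\cdot)^\dagger\bzs_3$ playing the role of $\zo_r$. Subtracting \eqref{B.13a} leads to a homogeneous integral equation for the difference $\omega(w):=\zo_r(w)-\bzs_3\zO_r(w)^\dagger\bzs_3$. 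Under the standing assumption of no spectral singularities and the generic or superexceptional case, the Marchenko integral operator is invertible in the appropriate Wiener-algebra setting, forcing $\omega\equiv0$. The companion identity for $\zo_l$ is obtained by the same argument applied to \eqref{4.10b} and \eqref{B.13b}.

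The principal obstacle will be justifying the vanishing of $\omega$ with enough care. One route is to decompose $\omega$ into its $\hat R$-Fourier part (lying in $L^1$) and its discrete bound-state part (a linear combination of exponentials $e^{\pm ik_sw}$) and verify that each component must vanish separately. Alternatively, one may appeal directly to the unique solvability of the Marchenko inverse problem: both $\zo_r$ and $\bzs_3\zO_r(\cdot)^\dagger\bzs_3$ are admissible Marchenko kernels producing the same Jost kernel $\K(\cdot,x)$ for every $x\in\R$, and the map from admissible Marchenko kernels to Jost kernels is injective in the classes under consideration, so the two kernels must agree.
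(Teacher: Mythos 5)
Your proposal is correct and follows essentially the same route as the paper: derive $\K(y,x)=\bzs_3K(x,y)^\dagger\bzs_3$ and $\J(y,x)=\bzs_3J(x,y)^\dagger\bzs_3$ from \eqref{B.4} and the triangular representations, take adjoints of \eqref{4.10} to exhibit $\bzs_3\zO_{r,l}(\cdot)^\dagger\bzs_3$ as Marchenko kernels producing the same dual Jost kernels, and conclude by uniqueness. The paper closes the final step (vanishing of your $\omega$) exactly along your first suggested route: unique solvability for large $|x|$ gives the identity only for $w$ on a half-line, the $L^1$ parts agree identically because $\hat{R}_{r,l}(\za)^\dagger=\bzs_3\hat{R}_{r,l}(\za)\bzs_3$ by \eqref{3.14}, and the remaining discrete part is entire in $w$, so analytic continuation extends the identity to all $w\in\R$.
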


\begin{proof}
Taking the adjoint in \eqref{4.10a}, we see that the matrix functions
$\bzs_3\zO_{l,r}(w)^\dagger\bzs_3$ satisfy the dual Marchenko equations
\eqref{B.13}, where
$$\K(y,x)=\bzs_3K(x,y)^\dagger\bzs_3,\qquad\J(y,x)=\bzs_3J(x,y)^\dagger\bzs_3.$$
Since the Marchenko equation \eqref{B.13a} is uniquely solvable in
$L^1(x,+\infty)^{n\times n}$ for large enough $x$ and \eqref{B.13b} is uniquely
solvable in $L^1(-\infty,x)^{n\times n}$ for large enough $-x$, we obtain the
symmetry relations
\begin{equation}\label{B.15}
\zo_r(w)=\bzs_3\zO_r(w)^\dagger\bzs_3,\qquad\zo_l(w)
=\bzs_3\zO_l(w)^\dagger\bzs_3,
\end{equation}
where $w\ge x_0>-\infty$ in the first identity and $w\le x_0<+\infty$ in the
second identity. Since the Fourier transformed reflection coefficients satisfy
$$\hat{R}_r(\za)^\dagger=\bzs_3\hat{R}_r(\za)\bzs_3,\qquad
\hat{R}_l(\za)^\dagger=\bzs_3\hat{R}_l(\za)\bzs_3,$$
we see that $\zO_r(w)-\hat{R}_r(w)$ is an entire analytic matrix function of $w$
that satisfies \eqref{B.15} for $w\ge x_0>-\infty$. Similarly,
$\zO_l(w)-\hat{R}_l(w)$ is an entire analytic matrix function that satisfies
\eqref{B.15} for $w\le x_0<+\infty$. By analytic continuation, \eqref{B.14}
holds true.
\end{proof}

We immediately have the symmetry relations for the norming constants
$$N_{r;s}=\bzs_3\N_{r;s}^\dagger\bzs_3,\qquad
N_{l;s}=\bzs_3\N_{l;s}^\dagger\bzs_3,$$
provided $k_s$ is a simple pole of $A_r(k)^{-1}$.

\begin{theorem}\label{th:B.3}
Let the generic case or the superexceptional case be satisfied and assume there
do not exist any spectral singularities. Then the dual Marchenko kernels satisfy
$$\zo_r(w)=\zO_r(w),\qquad\zo_l(w)=\zO_l(w).$$
\end{theorem}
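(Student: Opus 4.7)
The plan is to reduce the theorem to an identity of norming constants: since both $\zO_r$ and $\zo_r$ share the same continuous part $\hat{R}_r(w)$ by construction in \eqref{4.11a} and in the displayed formula for $\zo_r$ just before Theorem \ref{th:B.2}, the claim $\zo_r=\zO_r$ amounts to $\N_{r;s}=N_{r;s}$ at every simple pole $k_s$ of $A_r(k)^{-1}$ in $\C^+$, and analogously for the $l$-subscript.

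The first half of the argument re-expresses the dual norming constant in terms of the original Jost data. By \eqref{3.11}, the point $k_{\os}:=-k_s^*$ is itself a pole of $A_r(k)^{-1}$. Substituting the Jost--dual-Jost relation \eqref{B.4} into \eqref{B.8} at $k_s$, taking the adjoint, right-multiplying by $\bzs_3$, and invoking the residue identity $\tau_{r;\os}=-\bzs_3\tau_{l;s}^\dagger\bzs_3$ from \eqref{4.2}, one arrives at an equation of exactly the form \eqref{4.6} evaluated at the paired pole $k_{\os}$, with $\bzs_3\N_{r;s}^\dagger\bzs_3$ playing the role of $N_{r;\os}$. Cancelling $F_l(x,k_{\os})$ on the right yields
\[
  \N_{r;s}=\bzs_3\,N_{r;\os}^\dagger\,\bzs_3.
\]
The same relation can be cross-checked from Theorem \ref{th:B.2} by expanding $\zo_r=\bzs_3\zO_r^\dagger\bzs_3$ termwise in \eqref{4.11a}, using the Fourier transcription $\hat{R}_r(w)^\dagger=\bzs_3\hat{R}_r(w)\bzs_3$ of \eqref{3.14} for the continuous part, and matching exponentials under the involution $s\leftrightarrow\os$.

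The crux is then to show that the original norming constants themselves satisfy the companion pole-pairing symmetry
\[
  N_{r;s}=\bzs_3\,N_{r;\os}^\dagger\,\bzs_3,
\]
which, combined with the identity above, forces $\N_{r;s}=N_{r;s}$. To obtain this, I would apply Proposition \ref{th:3.1} with $\bV=\bF_l$ and $\bW=\bF_r$, exploiting that the $2n\times 2n$ matrix $\bF_r(x,-k^*)^\dagger(\zs_2\otimes\bzs_3)\bF_l(x,k)$ is $x$-independent for $0\neq k\in\C^+\cup\R$. Computing its $x\to\pm\infty$ asymptotics via \eqref{3.3} and \eqref{3.10}, then passing to the limit $k\to k_s$ and extracting the double residue through the partial-fraction expansion \eqref{4.1}, the residue identity \eqref{4.2}, and the defining relations \eqref{4.6} read off at both $k_s$ and $k_{\os}$, the cross-terms collapse to give exactly the desired symmetry. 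The $l$-statement follows by the same argument with the obvious permutation of symbols.

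The principal obstacle is this last step. The cancellation is delicate because $k_s$ and $k_{\os}$ are generically distinct poles (they coincide only for purely imaginary $k_s$, in which case the required symmetry degenerates to \eqref{4.13}), and the residue bookkeeping must simultaneously balance the $\bzs_3$ conjugation, the complex conjugation imposed by the $^\dagger$-operation, and the analytic continuations from the real line into $\C^+$.
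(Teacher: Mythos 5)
Your reduction of the theorem to the single identity $\N_{r;s}=N_{r;s}$ is correct (the continuous parts of $\zo_r$ and $\zO_r$ are indeed the same $\hat R_r$ by construction), and your first step, $\N_{r;s}=\bzs_3N_{r;\os}^\dagger\bzs_3$, is sound: it follows from \eqref{B.4}, \eqref{B.8} and \eqref{4.2} as you describe, and is precisely the discrete part of Theorem \ref{th:B.2}. The problem is that your ``crux'' step --- the pole-pairing symmetry $N_{r;s}=\bzs_3N_{r;\os}^\dagger\bzs_3$ for the \emph{original} norming constants --- is not a reduction at all: given your first step it is logically equivalent to the desired conclusion $\N_{r;s}=N_{r;s}$, and it is exactly the discrete part of Corollary \ref{th:B.4}, which the paper obtains only as a consequence of Theorems \ref{th:B.2} \emph{and} \ref{th:B.3}. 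So the argument is circular unless that symmetry is proved independently, and the method you sketch for it cannot succeed. Applying Proposition \ref{th:3.1} to the Jost matrices and comparing $x\to\pm\infty$ asymptotics produces identities among the coefficients $A_{r,l}$ and $B_{r,l}$ --- namely \eqref{3.10} and \eqref{3.11}, already derived in Section \ref{sec:3} --- and taking residues of the partial-fraction expansion \eqref{4.1} then yields only \eqref{4.2}, i.e.\ relations among the residues $\tau$. The norming constants are not functionals of the scattering coefficients and their residues: they are independent scattering data, defined by \eqref{4.6} as the proportionality matrices between the bound-state columns $F_r(x,k_s)\tau_{r;s}$ and $F_l(x,k_s)$, so no residue bookkeeping applied to identities among the $A$'s and $B$'s can output them. (Two further technical objections: the full matrices $\bF_l(x,k)$ and $\bF_r(x,-k^*)$ contain columns evaluated at points of $\C^-$ and hence are not defined off the real axis, and the asymptotics \eqref{3.3} you invoke hold only for real $k$.)

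What is actually needed --- and what the paper does --- is to work with the Jost and dual Jost solutions at the poles themselves rather than with the scattering coefficients. The paper forms the combination $F_r(x,k)A_r(k)^{-1}G_l(x,k)-F_l(x,k)A_l(k)^{-1}G_r(x,k)$, which by the Riemann--Hilbert relations \eqref{4.3} and \eqref{B.6} is reflection-free and meromorphic in $\C^+$; its residue at $k_s$ is \eqref{B.16}, which the defining relations \eqref{4.6}, \eqref{4.8}, \eqref{B.8}, \eqref{B.10} convert into the two equivalent forms $iF_l(x,k_s)\left[N_{r;s}-\N_{r;s}\right]G_l(x,k_s)=iF_r(x,k_s)\left[\N_{l;s}-N_{l;s}\right]G_r(x,k_s)$, and the $x\to\pm\infty$ asymptotics of the Jost and dual Jost matrices then force $N_{r;s}=\N_{r;s}$ and $N_{l;s}=\N_{l;s}$ directly, with no detour through the pole-pairing symmetry. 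You would need to supply an argument of this type to close the gap.
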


\begin{proof}
Using \eqref{3.16} and \eqref{B.12} we easily verify that for $0\neq k\in\R$
\begin{align*}
 F_r(x,k)G_r(x,-k)&-F_r(x,-k)G_r(x,k)\nonumber\\
&=F_r(x,k)\left[A_r(k)^{-1}G_l(x,k)-R_l(k)G_r(x,k)\right]\nonumber\\
&-\left[-F_r(x,k)R_l(k)+F_l(x,k)A_l(k)^{-1}\right]G_r(x,k)\nonumber\\
&=F_r(x,k)A_r(k)^{-1}G_l(x,k)-F_l(x,k)A_l(k)^{-1}G_r(x,k),
\end{align*}
which can be meromorphically extended to the upper half-plane. In the same way
we get
\begin{align*}
 F_l(x,k)G_l(x,-k)&-F_l(x,-k)G_l(x,k)\nonumber\\
&=F_l(x,k)\left[-R_r(k)G_l(x,k)+A_l(k)^{-1}G_r(x,k)\right]\nonumber\\
&-\left[F_r(x,k)A_r(k)^{-1}-F_l(x,k)R_r(k)\right]G_l(x,k)\nonumber\\
&=F_l(x,k)A_l(k)^{-1}G_r(x,k)-F_r(x,k)A_r(k)^{-1}G_l(x,k),
\end{align*}
which can be meromorphically extended to the upper half-plane. If $k_s$ is
a simple pole of $A_r(k)^{-1}$ and hence of $A_l(k)^{-1}$, we get by taking
the residues
\begin{equation}\label{B.16}
 F_r(x,k_s)\tau_{r;s}G_l(x,k_s)-F_l(x,k_s)\tau_{l;s}G_r(x,k_s)
\end{equation}
as well as the negative of this expression. Note that either term in the
expression \eqref{B.16} is exponentially decaying as $x\to\pm\infty$. Using
\eqref{4.6}, \eqref{4.8}, \eqref{B.8}, and \eqref{B.10}, we can write the
latter expression in the two equivalent forms
$$iF_l(x,k_s)\left[N_{r;s}-\N_{r;s}\right]G_l(x,k_s)
=iF_r(x,k_s)\left[\N_{l;s}-N_{l;s}\right]G_r(x,k_s).$$
Utilizing the asymptotic behavior of the Jost and dual Jost matrices as
$x\to\pm\infty$, we obtain $e^{2ik_sx}\left[N_{r;s}-\N_{r;s}\right]$ as
$x\to+\infty$ and $e^{-2ik_sx}\left[\N_{l;s}-N_{l;s}\right]$ as $x\to-\infty$.
Therefore,
$$\N_{r;s}=N_{r;s},\qquad\N_{l;s}=N_{l;s}.$$
Consequently, Theorem \ref{th:B.3} is true if the poles of $A_{r,l}(k)^{-1}$ are
simple.
\end{proof}

\begin{corollary}\label{th:B.4}
Let the generic case or the superexceptional case be satisfied and assume there
do not exist any spectral singularities. Then the Marchenko kernels satisfy
$$\zO_r(w)=\bzs_3\zO_r(w)^\dagger\bzs_3,\qquad
\zO_l(w)=\bzs_3\zO_l(w)^\dagger\bzs_3.$$
In other words, the Marchenko kernels are $\bzs_3$-hermitian.
\end{corollary}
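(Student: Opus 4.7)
The plan is to obtain Corollary \ref{th:B.4} as an immediate consequence of the two preceding theorems, with essentially no new work. Theorem \ref{th:B.2} has already established the cross identities
$$\zo_r(w)=\bzs_3\zO_r(w)^\dagger\bzs_3,\qquad\zo_l(w)=\bzs_3\zO_l(w)^\dagger\bzs_3,$$
obtained by taking the adjoint of the Marchenko equations \eqref{4.10}, noting that the adjoints of the kernels $K$ and $J$ (conjugated by $\bzs_3$) satisfy the dual Marchenko equations \eqref{B.13}, and invoking unique solvability of those integral equations on suitable half-lines together with analytic continuation past the support of the discrete/entire part. Theorem \ref{th:B.3} has then identified the dual kernels with the original ones,
$$\zo_r(w)=\zO_r(w),\qquad\zo_l(w)=\zO_l(w),$$
via a Wronskian-type computation comparing $F_{r,l}(x,\pm k)G_{r,l}(x,\mp k)$, a residue calculation at each simple pole $k_s$, and the relations \eqref{4.6}, \eqref{4.8}, \eqref{B.8}, \eqref{B.10} linking ordinary and dual norming constants.

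First I would simply substitute the equalities of Theorem \ref{th:B.3} into the identities of Theorem \ref{th:B.2}: replacing $\zo_r(w)$ by $\zO_r(w)$ in the first identity of \eqref{B.14} yields $\zO_r(w)=\bzs_3\zO_r(w)^\dagger\bzs_3$, and likewise for the left-hand kernel. This proves the $\bzs_3$-hermitian property stated in the corollary. Since Theorem \ref{th:B.3} was only proved under the standing assumption that the poles $k_s$ of $A_{r,l}(k)^{-1}$ are simple and under either the generic or the superexceptional case with no spectral singularities, I would make those hypotheses explicit in the statement of the proof. There is no genuine obstacle at this step; all of the work is already absorbed into Theorems \ref{th:B.2} and \ref{th:B.3}, and the corollary is a one-line chain of equalities.

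As a consistency check, I would verify that the adjoint symmetry relations \eqref{4.13} for the norming constants, already noted just after Theorem \ref{th:B.2}, are recovered by matching the residues at $k=k_s$ on both sides of $\zO_{r,l}(w)=\bzs_3\zO_{r,l}(w)^\dagger\bzs_3$ using the explicit representations \eqref{4.11}; this provides a sanity check that the exponential and discrete parts of $\zO_{r,l}$ are individually $\bzs_3$-hermitian, which is what one expects given that $\hat R_{r,l}(w)^\dagger=\bzs_3\hat R_{r,l}(w)\bzs_3$ follows from \eqref{3.14} by inverse Fourier transform.
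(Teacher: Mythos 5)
Your proposal is correct and matches the paper's intent exactly: Corollary \ref{th:B.4} is stated without a separate proof precisely because it follows by substituting the identities $\zo_{r,l}(w)=\zO_{r,l}(w)$ of Theorem \ref{th:B.3} into the relations \eqref{B.14} of Theorem \ref{th:B.2}. Your consistency check on the norming constants via \eqref{4.11} is a nice touch but not needed.
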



\section{Direct substitution}\label{sec:C}

In this appendix we derive the multisoliton solution $\bQ(x;t)$ of the nonlocal
nonlinear evolution equation \eqref{2.3} in the reflectionless case by
substituting the derivatives of the expressions
\begin{subequations}\label{C.1}
\begin{align}
\int_x^\infty dy\,\bQ(y;t)&=2K(x,x;t)=-2\bC_r\bPi_r(x;t)^{-1}\bB_r,
\label{C.1a}\\
\int_{-\infty}^x dy\,\bQ(y;t)&=2J(x,x;t)=-2\bC_l\bPi_l(x;t)^{-1}\bB_l,
\label{C.1b}
\end{align}
\end{subequations}
with respect to $x\in\R$ directly into \eqref{2.3}, where $\bPi_{r,l}(x;t)$ are
defined by \eqref{7.7}.

Recall that $\bzs_3=I_{m_1}\oplus(-I_{m_2})$ and $\tzs_3=I_p\oplus(-I_p)$,
where $p$ is the order of the matrices $A_{r,l}$ in \eqref{C.1} and \eqref{7.3}
is satisfied. Then \eqref{7.3} implies the intertwining relations
\begin{subequations}\label{C.2}
\begin{align}
\tzs_3\bP_{r,l}&=-\bP_{r,l}\tzs_3,\label{C.2a}\\
\tzs_3\bPi_{r,l}(x;t)&=\tPi_{r,l}(x;t)\tzs_3,\label{C.2b}
\end{align}
\end{subequations}
where
$$\tPi_r(x;t)=e^{2x\bA_r}e^{-4it\tzs_3\bA_r^2}-\bP_r,\qquad
\tPi_l(x;t)=e^{-2x\bA_l}e^{-4it\tzs_3\bA_l^2}-\bP_l.$$

\begin{theorem}\label{th:C.1}
Let $(\bA_r,\bB_r,\bC_r)$ be a matrix triplet satisfying \eqref{7.3} for which
the Sylvester equation \eqref{7.5} has a unique solution $\bP_r$. Then the
matrix function $\bQ(x;t)$ defined by
$$\bQ(x;t)=-4\bC_r\bPi_r(x;t)^{-1}\bA_re^{2x\bA_r}e^{-4it\tzs_3\bA_r^2}
\bPi_r(x;t)^{-1}\bB_r$$
satisfies the nonlinear evolution equation \eqref{2.3} in those $(x,t)\in\R^2$
for which the matrix $\bPi_r(x;t)$ is nonsingular.
\end{theorem}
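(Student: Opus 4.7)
The plan is to verify \eqref{2.3} by direct substitution: express each quantity in the equation as a finite sum of ``sandwich'' terms $\bC_r(\cdots)\bB_r$, reduce quadratic sandwich products using the Sylvester equation \eqref{7.5}, and check that the resulting combination vanishes. Write $\bE(x;t):=e^{2x\bA_r}e^{-4it\tzs_3\bA_r^2}$, so $\bPi_r=\bE+\bP_r$ and $\tPi_r=\bE-\bP_r$. I will use repeatedly the elementary facts $[\bA_r,\bE]=[\tzs_3,\bE]=0$, $\bE_x=2\bA_r\bE$, $\bE_t=-4i\tzs_3\bA_r^2\bE$, and (from \eqref{C.2b}) $\tzs_3\bPi_r^{-1}\tzs_3=\tPi_r^{-1}$.

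\textbf{Rewriting $\bQ$ and $\CQ$.} Setting $u(x;t):=-2\bC_r\bPi_r^{-1}\bB_r$, differentiation yields $\bQ=-u_x$, which agrees with the stated closed form. Using \eqref{7.3} plus $\tzs_3\bPi_r^{-1}\tzs_3=\tPi_r^{-1}$ gives $\bzs_3 u\bzs_3=2\bC_r\tPi_r^{-1}\bB_r$, so by \eqref{2.4},
\[
\CQ(x;t)\;=\;\bC_r(\bPi_r^{-1}+\tPi_r^{-1})\bB_r\;=\;2\bC_r\bPi_r^{-1}\bE\,\tPi_r^{-1}\bB_r,
\]
where the last equality uses $A^{-1}+B^{-1}=A^{-1}(A+B)B^{-1}$. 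Differentiating \eqref{7.8a} in $x$ and combining with this expression for $\CQ$ shows that the Miura identity $\bQ=\CQ_x+\CQ^2$ is automatic, so the integrand representation \eqref{2.4} is consistent with the closed form of $\bQ$.

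\textbf{Derivatives and the Sylvester collapse.} The derivatives of the resolvent follow from $(\bPi_r^{-1})_x=-2\bPi_r^{-1}\bA_r\bE\bPi_r^{-1}$ and $(\bPi_r^{-1})_t=4i\bPi_r^{-1}\tzs_3\bA_r^2\bE\bPi_r^{-1}$, with analogous formulas for $\tPi_r^{-1}$. Iterating these, I would write $\bQ_x$, $\bQ_{xx}$, and $\bQ_t$ as finite $\bC_r$-$\bB_r$ sandwiches involving powers of $\bA_r$, factors of $\bE$ or $\tzs_3\bE$, and an alternating string of $\bPi_r^{-1}$'s. Whenever a product of two sandwiches appears (in $\bQ^2$, $\bzs_3\bQ\bzs_3\bQ$, or $\CQ\bQ_x$), its inner $\bB_r\bC_r$ factor is replaced via \eqref{7.5} by $\bA_r\bP_r+\bP_r\bA_r$; together with $\bP_r=\bPi_r-\bE=\bE-\tPi_r$, each such product telescopes into single sandwiches, so the left-hand side of \eqref{2.3} ends up as a finite linear combination of terms $\bC_r\bPi_r^{-1}M\bPi_r^{-1}\bB_r$ and $\bC_r\bPi_r^{-1}M\tPi_r^{-1}\bB_r$ with $M$ a polynomial in $\bA_r$, $\bE$, and $\tzs_3$.

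\textbf{Main obstacle.} The hardest part is the nonlocal term $-2\CQ\bQ_x$, which mixes $\tPi_r^{-1}$ (from $\CQ$) with $\bPi_r^{-1}$ (from $\bQ_x$); the resulting cross-products must be brought into the canonical sandwich form produced by the local terms $i\bzs_3\bQ_t+\bQ_{xx}$ and $-\bQ^2+\bzs_3\bQ\bzs_3\bQ$. I expect to use the identity $\bPi_r^{-1}-\tPi_r^{-1}=-2\bPi_r^{-1}\bP_r\tPi_r^{-1}$ together with Sylvester to convert between $\bPi_r^{-1}$ and $\tPi_r^{-1}$, pairing the two quadratic terms $\bQ^2$ and $\bzs_3\bQ\bzs_3\bQ$ by $\tzs_3$-conjugation symmetry. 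A useful structural check is that, via the Miura identity already noted and the derivation in Appendix~\ref{sec:A}, the whole verification reduces to showing that the block off-diagonal part of $u$ solves the matrix NLS equation \eqref{1.6}; this is the classical output of the matrix triplet method \cite{ADM,DM2,SIMAI}, and matching the detailed sandwich algebra against that shortcut provides independent confirmation that the monomial-by-monomial cancellation in \eqref{2.3} goes through.
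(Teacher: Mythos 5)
Your setup is sound and follows exactly the route of the paper's proof in Appendix~\ref{sec:C}: write everything as sandwiches $\bC_r(\cdots)\bB_r$ built from alternating resolvents $\bPi_r^{-1}$, $\tPi_r^{-1}$ and factors $\bA_r^{j}\bE$, collapse every quadratic product through the Sylvester equation \eqref{7.5} written as $\bB_r\bC_r=\bA_r\bPi_r-\tPi_r\bA_r$ (or $\bA_r\bPi_r+\bPi_r\bA_r-2\bA_r\bE$), and use the partial-fraction identity $\bPi_r^{-1}\bE\tPi_r^{-1}=\tfrac12(\bPi_r^{-1}+\tPi_r^{-1})$, which is precisely the paper's \eqref{C.8}. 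Your expressions for $\CQ$, the resolvent derivatives, and the difference identity $\bPi_r^{-1}-\tPi_r^{-1}=-2\bPi_r^{-1}\bP_r\tPi_r^{-1}$ all check out. However, there is a genuine gap: the entire content of the theorem is the coefficient bookkeeping that you only describe and do not perform. The paper's proof consists of computing $\bQ_x$, $\bQ_{xx}$, $\bQ^2$, $\bzs_3\bQ\bzs_3\bQ$, $\CQ\bQ_x$, and $i\bzs_3\bQ_t$ explicitly as linear combinations of the products $\zG_{,j_1,\ldots,j_r,}$ (equations \eqref{C.5}, \eqref{C.6}, \eqref{C.10}--\eqref{C.13}), summing them in \eqref{C.14}--\eqref{C.15}, and observing that the survivors cancel only by virtue of the two specific identities $\zG_{;2,1,}+\zG_{,2,1,}-2\zG_{,0;2,1,}=0$ and $\zG_{;3,}+\zG_{,3,}-2\zG_{,0;3,}=0$. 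Asserting that ``the monomial-by-monomial cancellation goes through'' is not a proof of it; the coefficients $(32,32,-64)$ and $(-16,-16,32)$ that make \eqref{C.15} vanish emerge only after the full computation, and a single sign or factor error anywhere destroys the cancellation.

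The fallback you offer --- reduce to the statement that the block off-diagonal part of $u$ solves the matrix NLS equation and then invoke Appendix~\ref{sec:A} --- is also not available as written, because it is circular in this context. Equation \eqref{7.8a}, from which you want to read off the Miura identity, is itself derived in Section~\ref{sec:7} from the inverse scattering theory (via \eqref{3.19} applied to the potential $\bQ=\CQ^2+\CQ_x$), i.e.\ it already presupposes that the triplet data come from a genuine potential satisfying the Miura relation. Theorem~\ref{th:C.1} is stated for an arbitrary admissible triplet and is meant to be verified by pure algebra, independently of the scattering machinery. To rescue the shortcut you would need to prove, directly from the closed formulas, both that $\bQ=\CQ^2+\CQ_x$ holds with $\CQ=\bC_r(\bPi_r^{-1}+\tPi_r^{-1})\bB_r$ (an algebraic identity equivalent to $-2\bC_r\zG_{,1,}\bB_r+2\bC_r\zG_{;1;}\bB_r=\CQ^2$, again requiring a Sylvester collapse) and that this $\CQ$ satisfies \eqref{2.2}; neither is carried out, and the second is comparable in effort to the direct verification you set out to avoid.
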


\begin{proof}
Let us introduce the abbreviations
$$\bE=e^{2x\bA_r}e^{-4it\tzs_3\bA_r^2},\qquad\left[\ldots\right]=\bPi(x;t)
=\bE+\bP_r=e^{2x\bA_r}e^{-4it\tzs_3\bA_r^2}+\bP_r,$$
so that
$$\left[\bullet\right]=\tPi(x;t)=\tzs_3\left[\ldots\right]\tzs_3=\bE-\bP_r
=e^{2x\bA_r}e^{-4it\tzs_3\bA_r^2}-\bP_r.$$
Then $\bE_x=2\bA_r\bE=2\bE\bA_r$. In terms of these abbreviations we now define
the product
$$\zG_{,j_1,j_2,\ldots,j_r,}=\left[\ldots\right]^{-1}\bA_r^{j_1}\bE
\left[\ldots\right]^{-1}\bA_r^{j_2}\bE\left[\ldots\right]^{-1}\ldots
\left[\ldots\right]^{-1}\bA_r^{j_r}\bE\left[\ldots\right]^{-1}$$
consisting of $r+1$ factors $\left[\ldots\right]^{-1}$ interrupted by the
consecutive factors $\bA_r^{j_s}\bE$, where $j_s$ is a nonnegative integer
($s=1,2,\ldots,r$). Then \eqref{7.9a} implies that
\begin{equation}\label{C.3}
\bQ(x;t)=-4\bC_r[\ldots]^{-1}\bA_r\bE[\ldots]^{-1}\bB_r=-4\bC_r\zG_{,1,}\bB_r.
\end{equation}
In the subscript string $,j_1,\ldots,j_r,$ we replace a comma by a semicolon if
the factor $[\bullet]^{-1}$ takes the place of $[\ldots]^{-1}$. Using
\eqref{7.3} we get from \eqref{C.3}
\begin{equation}\label{C.4}
\bzs_3\bQ(x;t)\bzs_3=4\bC_r[\bullet]^{-1}\bA_r\bE[\bullet]^{-1}\bB_r
=4\bC_r\zG_{;1;}\bB_r.
\end{equation}

Using \eqref{C.3}, we easily derive the identity
\begin{align*}
\left[\zG_{,j_1,j_2,\ldots,j_r,}\right]_x&=-2\zG_{,1,j_1,\ldots,j_r,}
-2\zG_{,j_1,1,j_2,\ldots,j_r,}-\ldots-2\zG_{,j_1,\ldots,j_r,1,}\nonumber\\
&+2\zG_{,j_1+1,j_2,\ldots,j_r,}+2\zG_{,j_1,j_2+1,j_3,\ldots,j_r,}+\ldots
+2\zG_{,j_1,\ldots,j_{r-1},j_r+1,}
\end{align*}
where we have first differentiated the $r+1$ factors $\left[\ldots\right]^{-1}$
and then the factors $\bA_r^{j_s}\bE$ ($s=1,2,\ldots,r$). A similar
differentiation rule holds if some (or all) of the subscripted commas are
replaced by semicolons.

Using the above $x$-differentiation rule we get
\begin{equation}\label{C.5}
\bQ_x=8\bC_r\left(2\zG_{,1,1,}-\zG_{,2,}\right)\bB_r.
\end{equation}
Using the above $x$-differentiation rule again we get
\begin{equation}\label{C.6}
\bQ_{xx}=16\bC_r\left(-6\zG_{,1,1,1,}+3\zG_{,1,2,}+3\zG_{,2,1,}-\zG_{,3,}
\right)\bB_r.
\end{equation}

Next, using \eqref{C.3} and \eqref{C.4} we compute
\begin{align*}
\CQ_x&=\tfrac{1}{2}(\bQ-\bzs_3\bQ\bzs_3)
=-2\bC_r\zG_{,1,}\bB_r-2\bC_r\zG_{;1;}\bB_r\\
&=\frac{\pt}{\pt x}\bC_r
\left(\left[e^{2x\bA_r}e^{-4it\tzs_3\bA_r^2}+\bP_r\right]^{-1}
+\left[e^{2x\bA_r}e^{-4it\tzs_3\bA_r^2}-\bP_r\right]^{-1}\right)\bB_r,
\end{align*}
implying that
\begin{align}
\CQ(x;t)&=2\bC_r\left[e^{2x\bA_r}e^{-4it\tzs_3\bA_r^2}\pm\bP_r\right]^{-1}\times
\nonumber\\&\times e^{2x\bA_r}e^{-4it\tzs_3\bA_r^2}
\left[e^{2x\bA_r}e^{-4it\tzs_3\bA_r^2}\mp\bP_r\right]^{-1}\bB_r\nonumber\\
&=2\bC_r\zG_{,0;}\bB_r=2\bC_r\zG_{;0,}\bB_r,\label{C.7}
\end{align}
where we have used the identity
\begin{equation}\label{C.8}
[\ldots]^{-1}\bE[\bullet]^{-1}=[\bullet]^{-1}\bE[\ldots]^{-1}
=\frac{1}{2}\left([\ldots]^{-1}+[\bullet]^{-1}\right).
\end{equation}

Using \eqref{7.5} we get
\begin{subequations}\label{C.9}
\begin{align}
\bB_r\bC_r&=\bA_r[\ldots]+[\ldots]\bA_r-2\bA_r\bE,\label{C.9a}\\
\bB_r\bC_r&=\bA_r[\ldots]-[\bullet]\bA_r.\label{C.9b}
\end{align}
\end{subequations}
Equations \eqref{C.3} and \eqref{C.9a} imply the identity
\begin{equation}\label{C.10}
\bQ(x;t)^2=16\bC_r\left(\zG_{,1,2,}+\zG_{,2,1,}-2\zG_{,1,1,1,}\right)\bB_r.
\end{equation}
Equations \eqref{C.9b} and \eqref{7.3} imply the identity
\begin{align}
\bzs_3&\bQ\bzs_3\bQ=-16\bC_r\zG_{;1;}\bB_r\bC_r\zG_{,1,}\bB_r\nonumber\\
&=-16\bC_r\zG_{;1;}\left(\bA_r[\ldots]-[\bullet]\bA_r\right)\zG_{,1,}\bB_r
\nonumber\\
&=-16\bC_r[\bullet]^{-1}\bA_r\bE[\bullet]^{-1}
\left(\bA_r[\ldots]-[\bullet]\bA_r\right)[\ldots]^{-1}\bA_r\bE[\ldots]^{-1}\bB_r
\nonumber\\
&=-16\bC_r[\bullet]^{-1}\left(\bA_r\bE[\bullet]^{-1}\bA_r^2\bE
-\bA_r^2\bE[\ldots]^{-1}\bA_r\bE\right)[\ldots]^{-1}\bB_r\nonumber\\
&=-16\bC_r\zG_{;1;2,}\bB_r+16\bC_r\zG_{;2,1,}\bB_r.\label{C.11}
\end{align}
Equations \eqref{C.5}, \eqref{C.7}, and \eqref{C.9b} imply the identity
\begin{align}
\CQ\bQ_x&=16\bC_r[\ldots]^{-1}\bE[\bullet]^{-1}\bB_r\bC_r\left(2\zG_{,1,1,}
-\zG_{,2,}\right)\bB_r\nonumber\\
&=32\bC_r[\ldots]^{-1}\bE[\bullet]^{-1}(\bA_r[\ldots]-[\bullet]\bA_r)
[\ldots]^{-1}\bA_r\bE[\ldots]^{-1}\bA_r\bE[\ldots]^{-1}\bB_r\nonumber\\
&-16\bC_r[\ldots]^{-1}\bE[\bullet]^{-1}(\bA_r[\ldots]-[\bullet]\bA_r)
[\ldots]^{-1}\bA_r^2\bE[\ldots]^{-1}\bB_r\nonumber\\
&=32\bC_r\zG_{,0;2,1,}\bB_r-32\bC_r\zG_{,1,1,1,}\bB_r-16\bC_r\zG_{,0;3,}\bB_r
+16\bC_r\zG_{,1,2,}\bB_r.\label{C.12}
\end{align}

 Finally, using $\tzs_3[\ldots]\tzs=[\bullet]$ as well as
\begin{align*}
\bE_t&=-4i\tzs_3\bA_r^2\bE,\nonumber\\
([\ldots]^{-1})_t&=-[\ldots]^{-1}(-4i\tzs_3\bA_r^2\bE)[\ldots]^{-1}
=4i\tzs_3[\bullet]^{-1}\bA_r^2\bE[\ldots]^{-1}=4i\tzs_3\zG_{;2,}\nonumber\\
([\bullet]^{-1})_t&=-[\bullet]^{-1}(-4i\tzs_3\bA_r^2\bE)[\bullet]^{-1}
=4i\tzs_3[\ldots]^{-1}\bA_r^2\bE[\bullet]^{-1}=4i\tzs_3\zG_{,2;}
\end{align*}
we compute with the help of \eqref{7.3} and \eqref{C.3}
\begin{align}
i\bzs_3\bQ_t&=-4i\bzs_3\bC_r(\zG_{,1,})_t\bB_r
=-4i\bC_r\tzs_3(\zG_{,1,})_t\bB_r\nonumber\\
&=-4i\bC_r\tzs_3\left([\ldots]^{-1}\bA_r\bE[\ldots]^{-1}\right)_t\bB_r
\nonumber\\
&=16\bC_r\tzs_3\left(\tzs_3\zG_{;2,}\bA_r\bE[\ldots]^{-1}
-[\ldots]^{-1}\tzs_3\bA_r^3\bE[\ldots]^{-1}\right.\nonumber\\
&+\left.[\ldots]^{-1}\bA_r\bE\tzs_3\zG_{;2,}\right)\bB_r\nonumber\\
&=16\bC_r\zG_{;2,1,}\bB_r-16\bC_r\zG_{;3,}\bB_r+16\bC_r\zG_{;1;2,}\bB_r.
\label{C.13}
\end{align}

To prove \eqref{2.3} we now employ \eqref{C.13}, \eqref{C.6}, \eqref{C.10},
\eqref{C.11}, and \eqref{C.12} to compute
\begin{align}
i\bzs_3\bQ_t&+\bQ_{xx}-\bQ^2+\bzs_3\bQ\bzs_3\bQ-2\CQ\bQ_x\nonumber\\
&=16\bC_r\zG_{;2,1,}\bB_r-16\bC_r\zG_{;3,}\bB_r
+16\bC_r\zG_{;1;2,}\bB_r\nonumber\\
&-96\bC_r\zG_{,1,1,1,}\bB_r+48\bC_r\zG_{,1,2,}\bB_r
+48\bC_r\zG_{,2,1,}\bB_r-16\bC_r\zG_{,3,}\bB_r\nonumber\\
&-16\bC_r\zG_{,1,2,}\bB_r-16\bC_r\zG_{,2,1,}\bB_r
+32\bC_r\zG_{,1,1,1,}\bB_r\nonumber\\
&-16\bC_r\zG_{;1;2,}\bB_r+16\bC_r\zG_{;2,1,}\bB_r\nonumber\\
&-64\bC_r\zG_{,0;2,1,}\bB_r+64\bC_r\zG_{,1,1,1,}\bB_r
+32\bC_r\zG_{,0;3,}\bB_r-32\bC_r\zG_{,1,2,}\bB_r\nonumber\\
&=32\bC_r\zG_{;2,1,}\bB_r-16\bC_r\zG_{;3,}\bB_r
+32\bC_r\zG_{,2,1,}\bB_r-16\bC_r\zG_{,3,}\bB_r\nonumber\\
&-64\bC_r\zG_{,0;2,1,}\bB_r+32\bC_r\zG_{,0;3,}\bB_r.\label{C.14}
\end{align}

We now regroup the terms in the last member of \eqref{C.14} as follows:
\begin{align}
i\bzs_3\bQ_t&+\bQ_{xx}-\bQ^2+\bzs_3\bQ\bzs_3\bQ-2\CQ\bQ_x\nonumber\\
&=32\bC_r\zG_{;2,1,}\bB_r+32\bC_r\zG_{,2,1,}\bB_r
-64\bC_r\zG_{,0;2,1,}\bB_r\nonumber\\
&-16\bC_r\zG_{;3,}\bB_r-16\bC_r\zG_{,3,}\bB_r
+32\bC_r\zG_{,0;3,}\bB_r.\label{C.15}
\end{align}
Using \eqref{C.8} to derive the identities
$$\zG_{;2,1,}+\zG_{,2,1,}-2\zG_{,0;2,1,}=0_{2p\times2p},\qquad
\zG_{;3,}+\zG_{,3,}-2\zG_{,0;3,}=0_{2p\times2p},$$
it appears that the right-hand side of \eqref{C.15} equals the zero matrix. Thus
the expression for $\bQ(x;t)$ in the statement of Theorem \ref{th:C.1} satisfies
\eqref{2.3}, as claimed.
\end{proof}

By applying Theorem \ref{th:C.1} to the matrix triplet $(-\bA_l,\bB_l,-\bC_l)$
we prove

\begin{theorem}\label{th:C.2}
Let $(\bA_l,\bB_l,\bC_l)$ be a matrix triplet satisfying \eqref{7.3} for which
the Sylvester equation \eqref{7.5} has a unique solution $\bP_l$. Then the
matrix function $\bQ(x;t)$ defined by
$$\bQ(x;t)=-4\bC_l\bPi_l(x;t)^{-1}\bA_le^{-2x\bA_l}e^{-4it\tzs_3\bA_r^2}
\bPi_l(x;t)^{-1}\bB_l$$
satisfies the nonlinear evolution equation \eqref{2.3} in those $(x,t)\in\R^2$
for which the matrix $\bPi_l(x;t)$ is nonsingular.
\end{theorem}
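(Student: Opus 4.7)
The plan is to derive Theorem \ref{th:C.2} directly from Theorem \ref{th:C.1} by substituting the triplet $(\tilde{\bA},\tilde{\bB},\tilde{\bC})=(-\bA_l,\bB_l,-\bC_l)$ in place of $(\bA_r,\bB_r,\bC_r)$, and then checking that every hypothesis of Theorem \ref{th:C.1} is preserved while the resulting formula coincides with the one in the statement of Theorem \ref{th:C.2}.

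First I would verify the intertwining relations \eqref{7.3} for the substituted triplet. From $\tzs_3\bA_l=\bA_l\tzs_3$ we immediately obtain $\tzs_3(-\bA_l)=(-\bA_l)\tzs_3$, so $\tzs_3\tilde{\bA}=\tilde{\bA}\tzs_3$. The relation $\bB_l\bzs_3=-\tzs_3\bB_l$ is unchanged, since $\tilde{\bB}=\bB_l$. Finally, $\bzs_3\bC_l=\bC_l\tzs_3$ gives $\bzs_3(-\bC_l)=(-\bC_l)\tzs_3$, so $\bzs_3\tilde{\bC}=\tilde{\bC}\tzs_3$. Next I would check the Sylvester equation: setting $\tilde{\bA}\tilde{\bP}+\tilde{\bP}\tilde{\bA}=\tilde{\bB}\tilde{\bC}$ reads $-\bA_l\tilde{\bP}-\tilde{\bP}\bA_l=-\bB_l\bC_l$, i.e., $\bA_l\tilde{\bP}+\tilde{\bP}\bA_l=\bB_l\bC_l$, which is precisely \eqref{7.5} for the left triplet, so $\tilde{\bP}=\bP_l$ by the uniqueness hypothesis.

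Having established that Theorem \ref{th:C.1} applies to $(\tilde{\bA},\tilde{\bB},\tilde{\bC})$, I would compute the associated $\tilde{\bPi}(x;t)$ and the resulting solution. Since $\tilde{\bA}^2=\bA_l^2$ and $2x\tilde{\bA}=-2x\bA_l$, we have
\begin{equation*}
\tilde{\bPi}(x;t)=e^{2x\tilde{\bA}}e^{-4it\tzs_3\tilde{\bA}^2}+\tilde{\bP}
=e^{-2x\bA_l}e^{-4it\tzs_3\bA_l^2}+\bP_l=\bPi_l(x;t),
\end{equation*}
so invertibility of $\tilde{\bPi}(x;t)$ is equivalent to invertibility of $\bPi_l(x;t)$. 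Substituting into the formula produced by Theorem \ref{th:C.1} yields
\begin{align*}
\bQ(x;t)&=-4\tilde{\bC}\tilde{\bPi}(x;t)^{-1}\tilde{\bA}e^{2x\tilde{\bA}}
e^{-4it\tzs_3\tilde{\bA}^2}\tilde{\bPi}(x;t)^{-1}\tilde{\bB}\\
&=-4(-\bC_l)\bPi_l(x;t)^{-1}(-\bA_l)e^{-2x\bA_l}e^{-4it\tzs_3\bA_l^2}
\bPi_l(x;t)^{-1}\bB_l\\
&=-4\bC_l\bPi_l(x;t)^{-1}\bA_l e^{-2x\bA_l}e^{-4it\tzs_3\bA_l^2}
\bPi_l(x;t)^{-1}\bB_l,
\end{align*}
which agrees with the expression in the statement of Theorem \ref{th:C.2} (up to the transparent typo $\bA_r^2\to\bA_l^2$ in the exponential). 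Theorem \ref{th:C.1} then guarantees that this $\bQ(x;t)$ solves \eqref{2.3} wherever $\bPi_l(x;t)$ is nonsingular.

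I do not anticipate any real obstacle: the only slightly delicate point is the bookkeeping of signs showing that the two sign changes in $\tilde{\bA}$ and $\tilde{\bC}$ cancel both in the Sylvester equation and in the quadratic combination $\tilde{\bC}\tilde{\bPi}^{-1}\tilde{\bA}(\cdots)\tilde{\bPi}^{-1}\tilde{\bB}$. Since the nonlinear evolution equation \eqref{2.3} is derived from the Lax pair intrinsically and makes no preferred use of a rightward or leftward origin for the data, no further verification is required.
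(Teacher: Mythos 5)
Your proposal is correct and is exactly the paper's argument: the paper proves Theorem \ref{th:C.2} by the single remark that one applies Theorem \ref{th:C.1} to the triplet $(-\bA_l,\bB_l,-\bC_l)$, and your verification of the relations \eqref{7.3}, the Sylvester equation, the identity $\tilde{\bPi}=\bPi_l$, and the cancellation of signs in the final formula supplies precisely the bookkeeping the paper leaves implicit. You are also right that the $\bA_r^2$ in the exponential of the stated formula is a typo for $\bA_l^2$.
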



\begin{thebibliography}{WW}
\bibitem{Abl} M.J. Ablowitz, {\it Nonlinear Dispersive Waves. Asymptotic
Analysis and Solitons}, Cambridge Texts in Applied Mathematics {\bf 47},
Cambridge University Press, Cambridge, 2011.
\bibitem{AC} M.J. Ablowitz and P.A. Clarkson, {\it Solitons, Nonlinear Evolution
Equations and Inverse Scattering}, Cambridge University Press, Cambridge, 1991.
\bibitem{AKNS} M.J. Ablowitz, D.J. Kaup, A.C. Newell, and H. Segur, {\it The
inverse scattering transform -- Fourier analysis for nonlinear problems},
Stud. Appl. Math. {\bf 53}, 249--315 (1974).
\bibitem{APT} M.J. Ablowitz, B. Prinari, and A.D. Trubatch, {\it Discrete and
Continuous Nonlinear Schr\"o\-din\-ger Systems}, Cambridge University Press,
Cambridge, 2003.
\bibitem{ASg} M.J. Ablowitz and H. Segur, {\it Solitons and the Inverse
Scattering Transform}, SIAM, Philadelphia, 1981.
\bibitem{AdC} H. Aden and B. Carl, {\it On realizations of solutions of the KdV
equation by determinants on operator ideals}, J. Math. Phys. 37, 1833--1857
(1996).
\bibitem{AM} Z.S. Agranovich and V.A. Marchenko, {\it The Inverse Problem of
Scattering Theory}, Gordon and Breach, New York, 1963.
\bibitem{ADM} T. Aktosun, F. Demontis, and C. van der Mee, {\it Exact solutions
to the focusing nonlinear Schr\"o\-din\-ger equation}, Inverse Problems
{\bf 23}, 2171--2195 (2007).%
\bibitem{ADM2} T. Aktosun, F. Demontis, and C. van der Mee, {\it Exact
solutions to the sine-Gordon equation}, J. Math. Phys. {\bf 51} (2010), 123521,
27 pp.
\bibitem{AKV01} T. Aktosun, M. Klaus, and C. van der Mee, {\it Small-energy
asymptotics of the scattering matrix for the self-adjoint matrix
Schr\"o\-din\-ger operator on the line}, J. Math. Phys. {\bf 42}, 4627--4652
(2001).
\bibitem{AV1} T. Aktosun and C. van der Mee, {\it Explicit solutions to the
Korteweg-de Vries equation on the half-line}, Inverse Problems 22, 2165--2174
(2006).
\bibitem{AW18} T. Aktosun and R. Weder, {\it Inverse scattering on the half
line for the matrix Schr\"o\-din\-ger equation}, Journal of Mathematical
Physics, Analysis, Geometry {\bf 14}, 237--269 (2018).
\bibitem{AW20} T. Aktosun and R. Weder, {\it Direct and Inverse Scattering for
the Matrix Schr\"o\-din\-ger Equation}, Applied Mathematical Sciences {\bf 203},
Springer, New York, 2020.%
\bibitem{Berk} G. Berkolaiko, {\it An elementary introduction to quantum
graphs}. In: {\it Geometric and Computational Spectral Theory}, Contemporary
Mathematics {\bf 700}, Amer. Math. Soc., Providence RI, 2017, pp. 41--72; also:
arXiv:1603.07356v2 (2016).
\bibitem{BCFK} G. Berkolaiko, R. Carlson, S.A. Fulling, and P. Kuchment (eds.),
{\it Quantum Graphs and their Applications}, Contemporary Mathematics
{\bf 415}, Amer. Math. Soc., Providence RI, 2006.
\bibitem{BkKm} G. Berkolaiko and P. Kuchment, {\it Introduction to Quantum
Graphs}, Mathematical Surveys and Monographs {\bf 186}, Amer. Math. Soc.,
Providence RI, 2013.
\bibitem{BkLiu} G. Berkolaiko and W. Liu, {\it Simplicity of eigenvalues and
non-vanishing of eigenfunctions of a quantum graph}, J. Math. Anal. Appl.
{\bf 445}, 803--818 (2017).
\bibitem{BoKu} J. Boman and P. Kurasov, {\it Symmetries of quantum graphs and
the inverse scattering problem}, Adv. Appl. Math. {\bf 35}, 58--70 (2005).
\bibitem{CD82} F. Calogero and A. Degasperis, {\it Spectral Transform and
Solitons},~I, Studies in Mathematics and its Applications {\bf 13}, Elsevier,
Amsterdam and New York, 1982.
\bibitem{CS89} K. Chadan and P. Sabatier, {\it Inverse Problems in Quantum
Scattering Theory}, 2nd ed., Springer, New York, 1989.
\bibitem{CTNP} M. Chen, M.A. Tsankov, J.M. Nash, and C.E. Patton,
{\it Backward-volume-water microwave-envelope solitons in yttrium iron garnet
films}, Phys. Rev. B {\bf 49}, 12773--12790 (1994).
\bibitem{DT79} P. Deift and E. Trubowitz, {\it Inverse scattering on the
line}, Commun. Pure Appl. Math. {\bf 32}, 121--251 (1979).
\bibitem{D2} F. Demontis, {\it Exact solutions to the modified
Korteweg-de Vries equation}, Theor. Math. Phys. {\bf 168}, 886--897 (2011).
\bibitem{DM1} F. Demontis and C. van der Mee, {\it Marchenko equations and
norming constants of the matrix Zakharov-Shabat system}, Operators and Matrices
{\bf 2}, 79--113 (2008).
\bibitem{DM2} F. Demontis and C. van der Mee, {\it Explicit solutions of the
cubic matrix nonlinear Schr\"o\-din\-ger equation}, Inverse Problems {\bf 24},
02520 (2008), 16 pp.
\bibitem{EH} W. Eckhaus and A. van Harten, {\it The Inverse Scattering
Transformation and the Theory of Solitons}, North-Holland, Amsterdam, 1981.
\bibitem{EKKST} P. Exner, J.P. Keating, P. Kuchment, T. Sunada, and A. Teplyaev
(eds.), {\it Analysis on Graphs and its Applications}, Proceedings of Symposia
in Pure Mathematics {\bf 77}, Amer. Math. Soc., Providence RI, 2008.
\bibitem{Fa64} L.D. Faddeev, {\it Properties of the} $S$-{\it matrix of the
one-di\-men\-sional Schr\"o\-din\-ger equation}, Amer. Math. Soc. Transl.
Series {\bf 2}, {\bf 65}, 139--166 (1964).
\bibitem{FT} L.D. Faddeev and L.A. Takhtajan, {\it Hamiltonian Methods in the
Theory of Solitons}, Classics in Mathematics, Springer, New York, 1987.
\bibitem{GRS} I.M. Gelfand, D.A. Raikov, and G.E. Shilov, {\it Commutative
Normed Rings}, Chelsea Publ., New York, 1964.
\bibitem{Ger} N.I. Gerasimenko, {\it The inverse scattering problem on
a noncompact graph}, Theor. Math. Phys. {\bf 75}, 460--470 (1988).
\bibitem{GerPav} I. Gerasimenko and B.S. Pavlov, {\it A scattering problem on
a noncompact graphs}, Theor. Math. Phys. {\bf 74}, 230--240 (1988).
\bibitem{GtSm} B. Gutkin and U. Smilansky, {\it Can one hear the shape of
a graph}?, J. Phys. A: Math. Gen. {\bf 34}, 6061--6068 (2001).
\bibitem{Hm02} M.S. Harmer, {\it Inverse scattering for the matrix
Schr\"o\-din\-ger operator and Schr\"o\-din\-ger operator on graphs with
general self-adjoint boundary conditions}, ANZIAM J. {\bf 44}, 161--168 (2002).
\bibitem{Hm04} M.S. Harmer, {\it The matrix Schr\"o\-din\-ger operator and
Schr\"o\-din\-ger operator on graphs}, Ph.D. thesis, University of Auckland,
New Zealand, 2004.
\bibitem{Hm05} M.S. Harmer, {\it Inverse scattering on matrices with boundary
conditions}, J. Phys. A: Math. Gen. {\bf 38}, 4875--4885 (2005).
\bibitem{Has} A. Hasegawa, {\it Optical Solitons in Fibers}, Springer Series
in Photonics {\bf 9}, Springer, New York, 2002.
\bibitem{HT} A. Hasegawa and F. Tappert, {\it Transmission of stationary
nonlinear optical pulses in dispersive dielectric fibers}. I.~{\it Anomalous
dispersion}, Appl. Phys. Lett. {\bf 23}(3), 142--144 (1973); II.~{\it Normal
dispersion}, Appl. Phys. Lett. {\bf 23}(4), 171--172 (1973).
\bibitem{HJ} R.A. Horn and C.J. Johnson, {\it Topics in Matrix Analysis},
Cambridge University Press, Cambridge, 1994.
\bibitem{KFCG} P.G. Kevrekidis, D.J. Frantzeskakis, R. Carretero-Gonz\'alez,
{\it Emergent Non-linear Phenomena in Bose-Einstein Condensates}, Springer,
Berlin, 2008.
\bibitem{K88} M. Klaus, {\it Low-energy behaviour of the scattering matrix for
the Schr\"o\-din\-ger equation on the line}, Inverse Problems {\bf 4}, 505--512
(1988).
\bibitem{KS99} V. Kostrykin and R. Schrader, {\it Kirchhoff's rule for quantum
wires}, J. Phys. A: Math. Gen. {\bf 32}, 595--630 (1999).
\bibitem{KS00} V. Kostrykin and R. Schrader, {\it Kirchhoff's rule for quantum
wires}. II.~{\it The inverse problem with possible applications to quantum
computers}, Fortschr. Phys. {\bf 48}, 703--716 (2000).
\bibitem{Km04} P. Kuchment, {\it Quantum graphs}. I.~{\it Some basic
structures}, Waves Random Media {\bf 14}, S107--S128 (2004).
\bibitem{Km05} P. Kuchment, {\it Quantum graphs}. II.~{\it Some spectral
properties of quantum and combinatorial graphs}, J. Phys. A: Math. Gen.
{\bf 38}, 4887--4900 (2005).
\bibitem{KuNw05} P. Kurasov and M. Nowaczyk, {\it Geometric properties of
quantum graphs and vertex scattering matrices}, Opusc. Math. {\bf 30}, 295--309
(2010).
\bibitem{KuNw10} P. Kurasov and M. Nowaczyk, {\it Inverse spectral problem for
quantum graphs}, J. Phys. A: Math. Gen. {\bf 38}, 4901--4915 (2005).
\bibitem{KuSt} P. Kurasov and F. Stenberg, {\it On the inverse scattering
problem on branching graphs}, J. Phys. A: Math. Gen. {\bf 35}, 101--121 (2002).
\bibitem{Lx} P. Lax, {\it Integrals of nonlinear equations of evolution and
solitary waves}, Commun. Pure Appl. Math. {\bf 21}, 467--490 (1968).
\bibitem{PS1} C.J. Pethick and H. Smith, {\it Bose-Einstein Condensation in
Dilute Gases}, $2^{\text{\tiny nd}}$ ed., Cambridge University Press, Cambridge,
2008.
\bibitem{PS2} L.P. Pitaevskii and S. Stringari, {\it Bose-Einstein Condensation
and Superconductivity}, Oxford University Press, Oxford, 2016.
\bibitem{Sb1} C. Schiebold, {\it An operator theoretic approach to the Toda
lattice equation}, Physica D {\bf 122}, 37--61 (1998).
\bibitem{Sb2} C. Schiebold, {\it Solutions of the sine-Gordon equation coming
in clusters}, Revista Matem\'atica Complutense {\bf 15}, 265--325 (2002).
\bibitem{Shaw} J.K. Shaw, {\it Mathematical Principles of Optical Fiber
Communications}, CBMS-NSF Regional Conference Series in Applied Mathematics
{\bf 76}, SIAM, Philadelphia, 2004.
\bibitem{SIMAI} C. van der Mee, {\it Nonlinear Evolution Models of Integrable
Type}, SIMAI e-Lecture Notes {\bf 11}, SIMAI, Torino, 2013.
\bibitem{Wd} M. Wadati and T. Kamijo, {\it On the extension of inverse
scattering method}, Prog. Theor. Phys. {\bf 52}, 397--414 (1974).
\bibitem{Z} V.E. Zakharov, {\it Hamilton formalism for hydrodynamic plasma
models}, Sov. Phys. JETP {\bf 33}, 927--932 (1971).
\bibitem{ZP} V.E. Zakharov and A.F. Popkov, {\it Contribution to the nonlinear
theory of magnetostatic spin waves}, Sov. Phys. JETP {\bf 57}, 350--355 (1983).
\bibitem{ZS72} V.E. Zakharov and A.B. Shabat, {\it Exact theory of
two-dimensional self-focusing and one-dimensional self-modulation of waves in
nonlinear media}, Sov. Phys. JETP {\bf 34}, 62--69 (1972).
\end{thebibliography}
\end{document}